\documentclass[aps,prl,reprint]{revtex4-2}

\usepackage{amsmath, amssymb, amsthm, mathtools, bm, xfrac, nicefrac, dsfont}

\usepackage[dvipsnames]{xcolor}
\usepackage[most]{tcolorbox}

\usepackage{graphicx}
\usepackage{pgfplots} 
\pgfplotsset{compat=1.18} 
\usepackage{tikz}
\usetikzlibrary{shapes,arrows.meta,positioning,calc}

\usepackage[mathscr]{euscript}
\usepackage{dcolumn}
\usepackage{algpseudocode}
\allowdisplaybreaks[2]

\usepackage{booktabs}
\usepackage{multirow}
\usepackage{makecell}
\usepackage{adjustbox}
\usepackage{array}
\newcolumntype{C}[1]{>{\centering\arraybackslash}m{#1}} 
\usepackage{colortbl}
\usepackage{arydshln} 

\usepackage[linktocpage=true]{hyperref}
\hypersetup{
    unicode=false,
    pdftoolbar=true,
    pdfmenubar=true,
    pdffitwindow=false,
    pdfstartview={FitH},
    pdfnewwindow=true,
    colorlinks=true,
    linkcolor=Red,
    citecolor=ForestGreen,
    filecolor=Magenta,
    urlcolor=BlueViolet,
}

\newcommand{\Renyi}{{}R\'{e}nyi{ }}
\newcommand{\EA}{\mathrm{EA}}
\newcommand{\Tr}{\operatorname{Tr}}

\newcommand{\Err}{\operatorname{Err}}

\newcommand{\I}{\mathbf 1}

\def\X{\mathsf{X}}
\def\Y{\mathsf{Y}}
\def\A{\mathsf{A}}
\def\B{\mathsf{B}}

\def\R{\mathsf{R}}

\theoremstyle{plain}
\newtheorem{theorem}{Theorem}
 
\newtheorem{proposition}[theorem]{Proposition}

\theoremstyle{definition}

\theoremstyle{remark}

\theoremstyle{plain}
\newtheorem{app-theorem}{Theorem}[section]
\newtheorem{app-lemma}[app-theorem]{Lemma}
\newtheorem{app-proposition}[app-theorem]{Proposition}
\theoremstyle{remark}
\newtheorem{app-remark}[app-theorem]{Remark}

\begin{document}

\title{Superadditivity for Entanglement-Assisted Communication}

\author{Hao-Chung Cheng}
\affiliation{Department of Electrical Engineering and Graduate Institute of Communication Engineering, National Taiwan University, Taiwan}
\affiliation{Center for Quantum Science and Engineering,  National Taiwan University, Taiwan}
\affiliation{Physics/Mathematics Division, National Center for Theoretical Sciences, Taiwan}
\affiliation{Hon Hai (Foxconn) Quantum Computing Center, Taiwan}
\author{Mario Berta}
\affiliation{Institute for Quantum Information, RWTH Aachen University, Germany}


\date{\today}

\begin{abstract}
The entanglement-assisted capacity of a quantum channel admits an additive single-letter characterization, implying that joint encodings across channel uses cannot increase the ultimate communication rate. Here, we show that this additive picture does not extend to communication reliability. Specifically, we prove that the Petz–Rényi channel information can be strictly superadditive for every \(\alpha\in[\sfrac12,1)\), yielding a genuine multi-copy enhancement of the entanglement-assisted random-coding error exponent, even though the entanglement-assisted capacity remains additive. We establish this phenomenon analytically already for measurement channels, which are entanglement-breaking and have additive unassisted capacity. Remarkably, this strict superadditivity is witnessed by a separable, classically correlated two-copy channel-input marginal, demonstrating that no entanglement between the transmitted systems is required. Our results show that, although correlations across channel uses cannot increase the ultimate rate of entanglement-assisted communication, they can enhance its reliability.
\end{abstract}

\maketitle

\section{Introduction}

The fundamental limits of reliable information transmission over a noisy point-to-point channel are quantified by its \emph{channel capacity}. 
For a classical channel $\mathscr{W}$, the capacity is characterized by the mutual information of channel $I_1(\mathscr{W})$ \cite{Sha48}.
Moreover, Shannon proved that the capacity is additive under independent uses of channels, i.e.~$I_1(\mathscr{W}_1 \otimes \mathscr{W}_2) = I_1(\mathscr{W}_1) + I_1(\mathscr{W}_2)$, reducing the evaluation of this fundamental quantity to a computable single-letter formula, cast as a fixed-dimensional convex optimization.
Operationally, the additivity of $I_1(\mathscr{W})$ means that encoding classical data jointly across both channels yields no advantage for the maximum achievable rate.

Quantum mechanics, however, fundamentally departs from this paradigm.
For classical communication over a quantum channel $\mathscr{N}$, the Holevo information
of a channel \(\chi(\mathscr N)\)~\cite{Hol98,SW97} need not be additive:
there exist quantum channels \(\mathscr{N}_1\) and \(\mathscr{N}_2\) such that
$
    \chi(\mathscr{N}_1\otimes\mathscr{N}_2)
    >
    \chi(\mathscr{N}_1) + \chi(\mathscr{N}_2)
$
\cite{hastings2009superadditivity,aubrun2011hastings}.
Consequently, ensembles containing entangled states across independent
channel uses can outperform product-state encodings, and hence the classical
capacity generally requires regularization over arbitrarily many channel
uses.
An even more striking nonadditivity arises in quantum communication
\cite{lloyd1997capacity,shor2002quantum,devetak2005private}: two channels
with individually vanishing quantum capacities can have positive quantum
capacity when used jointly, a phenomenon known as
\emph{superactivation}~\cite{smith2008quantum}.
More generally, no fixed finite block length suffices to determine the
quantum capacity of all channels~\cite{cubitt2015unbounded}.
These results establish correlations across channel uses as an operational
resource, while explaining why quantum channel capacities often resist
single-letter characterization. 
For unassisted communication, however, this nonadditivity disappears
for entanglement-breaking channels: their Holevo information is strongly
additive, and hence their classical capacity is single-lettered~\cite{shor2002additivity}.

To restore the elegant phenomenon of additivity for general quantum channels, preshared entanglement between the sender and receiver emerges as an operational resolution \cite{BSS+99,BSS+02,Hol02}.
Bennett \textit{et al.} showed that allowing the sender and receiver unlimited preshared entanglement reduces the classical capacity of a quantum channel, $C_{\mathrm{EA}}(\mathscr{N})$, to an additive, single-letter optimization of the quantum mutual information. The corresponding entanglement-assisted quantum capacity satisfies $Q_{\EA}(\mathscr{N}) = \frac12 C_{\EA}(\mathscr{N})$ by teleportation \cite{BBC+93} and superdense coding \cite{BW92}.
Hence, the regularization and superadditivity that obstruct unassisted capacity formulas disappear in the entanglement-assisted setting, yielding a natural quantum analogue of Shannon’s coding theorem. More broadly, this result reveals that the nonadditive complexity of quantum communication is not an immutable property of the channel alone, but depends fundamentally on which correlations are available as operational resources.

\section{Quality of communication}

While channel capacity establishes the ultimate \emph{quantity} of reliably transmissible information, the operational performance of practical physical systems is equally governed by the \emph{quality} of that communication. This quality is characterized by the \emph{error exponent} for rates below capacity and the \emph{strong converse exponent} for rates above capacity. Together, these exponents dictate the exponential decay of the decoding error probability and success probability, respectively, at any fixed transmission rate.
For transmission rates above capacity, the strong converse exponent \cite{GW14, LY24b} is determined by a simple formula in terms of the channel's sandwiched \Renyi information $\widetilde{I}_{\alpha}(\mathscr{N})$ of order $\alpha>1$.
Because this sandwiched quantity is additive in this regime \cite{GW14}, the entanglement-assisted strong converse exponent circumvents intractable asymptotic limits, yielding a computable single-letter formula.
Furthermore, the recent proof extending this result to $\alpha \in [\sfrac12, 1)$ \cite{li2026completely}, together with the mutual-information point \(\alpha=1\), demonstrates additivity of the sandwiched--\Renyi channel information throughout its full data-processing range \(\alpha\geq\sfrac12\).
Taken together, these results show that correlations across multiple channel uses provide no advantage for the entanglement-assisted capacity or the above-capacity strong-converse exponent.

Recent progress establishes an exponential decay rate for the decoding error probability, expressed in terms of the channel's Petz--\Renyi information \(I_{\alpha}(\mathscr{N})\) of order \(\alpha\in[\sfrac12,1)\) \cite{preparation}.
Given that entanglement assistance resolves the nonadditivity of channel capacity, prior results on the additivity of \(I_{1}(\mathscr{N})\) and \(\widetilde{I}_{\alpha}(\mathscr{N})\) for \(\alpha\geq\sfrac12\) naturally suggest that \(I_{\alpha}(\mathscr{N})\), and thereby the quality of entanglement-assisted communication, might exhibit the same well-behaved additive structure when coding with rates at capacity and above.

However, when turning to the practically more relevant regime of transmission rates below capacity, we prove that the Petz--\Renyi channel information exhibits strict superadditivity for every $\alpha\in[1/2,1)$, yielding a genuine multi-copy improvement in communication reliability.
Interestingly, this failure of additivity already occurs for an entanglement-breaking measurement channel, a class for which the Holevo information---and hence the unassisted classical capacity---is additive \cite{shor2002additivity}. 
Hence, even a channel that outputs only classical data and whose output is necessarily separable from any retained reference can exhibit a collective entanglement-assisted advantage in its Petz--\Renyi information; see Figure~\ref{fig:product_entanglement}.

A striking feature of our result is that the strict superadditivity established can even be witnessed without entanglement between the transmitted systems.
In sharp contrast to many celebrated literature whereas quantum nonadditivity is often regarded as an intrinsically entanglement-driven phenomenon \cite{vollbrecht2001entanglement, smith2008quantum, 	hastings2009superadditivity, cubitt2015unbounded, 	elkouss2015superadditivity} \footnote{Note that nonadditivity via classical correlations has also been observed in wiretap classical-quantum channels.}, the witness can be chosen to have a separable, indeed classically correlated, two-copy channel marginal.
This demonstrates that entanglement across the transmitted inputs is not an essential ingredient of the superadditive advantage.








\begin{figure*}[t]
    \centering
    \begin{minipage}{0.48\textwidth}
        \centering
        \textbf{(a)} \\
        \resizebox{\linewidth}{!}{
        \begin{tikzpicture}[x=1cm, y=1cm]

            \definecolor{encFill}{RGB}{249, 232, 219}
            \definecolor{encBorder}{RGB}{202, 114, 57}
            \definecolor{chanFill}{RGB}{217, 218, 220}
            \definecolor{chanText}{RGB}{88, 89, 91}
            \definecolor{entFill}{RGB}{255, 252, 196}
            \definecolor{entText}{RGB}{243, 160, 24}
            \definecolor{decFill}{RGB}{241, 235, 244}
            \definecolor{decText}{RGB}{104, 46, 142}

            \foreach \yOffset in {0, -2.2, -4.4, -6.6} {
                \begin{scope}[yshift=\yOffset cm]
                    \draw[color=encBorder, fill=encFill, thick, rounded corners=6pt] 
                        (0, 0.1) rectangle (2.8, -1.6);
                    \node[color=encBorder, font=\Large] at (1.4, -0.75) {Encoder};

                    \fill[chanFill] 
                        (3.2, 0.1) -- (7.6, 0.1) -- (7.6, 0.3) -- 
                        (8.2, -0.25) -- (7.6, -0.8) -- (7.6, -0.6) -- (3.2, -0.6) -- cycle;
                    \node[color=chanText, font=\large] at (5.4, -0.25) {Quantum channel $\mathscr{N}$};

                    \fill[entFill, rounded corners=3pt] 
                        (3.2, -0.9) rectangle (8.2, -1.6);
                    \node[color=entText, font=\large] at (5.7, -1.25) {shared entanglement};
                \end{scope}
            }

            \fill[decFill, rounded corners=8pt] 
                (8.5, 0.4) rectangle (13.5, -8.5);
            \node[color=decText, font=\huge] at (11.0, -1.5) {Decoder};

            \begin{scope}[shift={(11.0, -5.8)}]
                \draw[color=decText, line width=1.2pt, fill=white, rounded corners=4pt] (-1.7, -0.4) rectangle (1.7, 1.9);
                
                \draw[color=decText, line width=1.2pt] (-1.3, 0) -- (1.3, 0);
                \draw[color=decText, line width=1.2pt] (-1.3, 0) arc (180:0:1.3);
                
                \foreach \angle in {30, 60, 90, 120, 150} {
                    \draw[color=decText, line width=1.2pt] (\angle:1.15) -- (\angle:1.3);
                }
                \foreach \angle in {15, 45, 75, 105, 135, 165} {
                    \draw[color=decText!40, line width=0.8pt] (\angle:1.2) -- (\angle:1.3);
                }
                
                \fill[decText] (-0.06, 0) -- (0.06, 0) -- (65:1.55) -- cycle;
                
                \fill[decText] (0, 0) circle (0.12cm);
                \fill[white] (0, 0) circle (0.04cm);
            \end{scope}

        \end{tikzpicture}
        }
    \end{minipage}
    \hfill 
    \begin{minipage}{0.48\textwidth}
        \centering
        \textbf{(b)} \\
        \resizebox{\linewidth}{!}{
        \begin{tikzpicture}[x=1cm, y=1cm]

            \definecolor{encFill}{RGB}{249, 232, 219}
            \definecolor{encBorder}{RGB}{202, 114, 57}
            \definecolor{chanFill}{RGB}{217, 218, 220}
            \definecolor{chanText}{RGB}{88, 89, 91}
            \definecolor{entFill}{RGB}{255, 252, 196}
            \definecolor{entText}{RGB}{243, 160, 24}
            \definecolor{decFill}{RGB}{241, 235, 244}
            \definecolor{decText}{RGB}{104, 46, 142}

            \draw[color=encBorder, fill=encFill, thick, rounded corners=6pt] 
                (0, 0.1) rectangle (2.8, -8.2);
            \node[color=encBorder, font=\Large] at (1.4, -4.05) {Encoder};

            \foreach \yOffset in {0, -2.2, -4.4, -6.6} {
                \begin{scope}[yshift=\yOffset cm]
                    \fill[entFill, rounded corners=3pt] 
                        (3.3, -0.9) rectangle (8.2, -1.6);
                    \node[color=entText, font=\large] at (5.85, -1.25) {shared entanglement};
                \end{scope}
            }

            \fill[entFill, rounded corners=3pt] 
                (2.9, 0.1) rectangle (3.5, -8.2);

            \foreach \yOffset in {0, -2.2, -4.4, -6.6} {
                \begin{scope}[yshift=\yOffset cm]
                    \fill[chanFill] 
                        (3.7, 0.1) -- (7.6, 0.1) -- (7.6, 0.3) -- 
                        (8.2, -0.25) -- (7.6, -0.8) -- (7.6, -0.6) -- (3.7, -0.6) -- cycle;
                    \node[color=chanText, font=\large] at (5.65, -0.25) {Quantum channel $\mathscr{N}$};
                \end{scope}
            }

            \fill[decFill, rounded corners=8pt] 
                (8.5, 0.4) rectangle (13.5, -8.5);
            \node[color=decText, font=\huge] at (11.0, -1.5) {Decoder};

            \begin{scope}[shift={(11.0, -5.8)}]
                \draw[color=decText, line width=1.2pt, fill=white, rounded corners=4pt] (-1.7, -0.4) rectangle (1.7, 1.9);
                
                \draw[color=decText, line width=1.2pt] (-1.3, 0) -- (1.3, 0);
                \draw[color=decText, line width=1.2pt] (-1.3, 0) arc (180:0:1.3);
                
                \foreach \angle in {30, 60, 90, 120, 150} {
                    \draw[color=decText, line width=1.2pt] (\angle:1.15) -- (\angle:1.3);
                }
                \foreach \angle in {15, 45, 75, 105, 135, 165} {
                    \draw[color=decText!40, line width=0.8pt] (\angle:1.2) -- (\angle:1.3);
                }
                
                \fill[decText] (-0.06, 0) -- (0.06, 0) -- (65:1.55) -- cycle;
                
                \fill[decText] (0, 0) circle (0.12cm);
                \fill[white] (0, 0) circle (0.04cm);
            \end{scope}

        \end{tikzpicture}
        }
    \end{minipage}
    
    \caption{Communication assisted with product and joint entanglement.
    \textbf{(a)} Independent preshared entangled states (highlighted in \colorbox[RGB]{255, 252, 196}{yellow}) are supplied to the individual channel uses. This product resource suffices for the channel capacity $I_1(\mathscr N)$, the sandwiched--\Renyi information $\widetilde I_{\alpha}(\mathscr N)$ for $\alpha\geq\frac12$, and the strong-converse exponent $E_{\mathrm{sc}}(R;\mathscr N)$.
    \textbf{(b)} A joint entangled state is supplied to a joint encoder acting across multiple channel uses. Since the Petz--\Renyi information $I_{\alpha}(\mathscr N)$, $\alpha\in(0,1)$, can be strictly superadditive, joint entanglement can increase the random-coding exponent $E_{\mathrm r}(R;\mathscr N)$ and thereby improve communication quality.}
    \label{fig:product_entanglement}
\end{figure*}

\section{System Model} \label{sec:model}

Let $\mathscr{N}_{\A\to\B}$ be a quantum channel from Alice's system $\A$ to Bob's system $\B$.
In entanglement-assisted (EA) communication, an entangled state $\theta_{\underline{\A}\underline{\R}}$ is shared between Alice (holding $\underline{\A}$) and Bob (holding $\underline{\R}$).
To send a message $m \in \{1,2,\ldots, \lfloor 2^{nR} \rfloor\}$ of rate $R$, Alice applies an encoding operation $\mathscr{E}_{\underline{\A}\to\A^n}^m$ on her part of shared entanglement to prepare a length-$n$ quantum codeword on system $\A^n \equiv \A_1 \ldots \A_n$.
The quantum codeword then undergoes the $n$-fold product channel $\mathscr{N}_{\A\to\B}^{\otimes n}$.
At receiver, Bob applies a quantum measurement $\big\{M_{\underline{\R}\B^n}^{m}\big\}_m$ on the noisy quantum system $\B^n \equiv \B_1 \ldots \B_n$ and his part of shared entanglement.
The decoding error probability is 
\begin{align}
\!\Err\! =\!1-
\frac{1}{\lfloor 2^{nR} \rfloor} \!\sum_{m}	\Tr\!\left[  \mathscr{N}_{\A\to\B}^{\otimes n} \circ \mathscr{E}_{\underline{\A}\to \A^n}^m\!\left(\theta_{\underline{\A}\underline{\R}}\right)
         {M}_{\underline{\R}\B^n}^m \right]\!.
\end{align}
We call such an encoder $\{\mathscr{E}_{\underline{\A}\to \A^n}^m\}_m$, decoder $\big\{M_{\underline{\R}\B^n}^{m}\big\}_m$, and shared entanglement $\theta_{\underline{\A}\underline{\R}}$ an $(n,R,\varepsilon)$ code if $\Err\leq \varepsilon$.
The maximum achievable rate over all EA-codes is
\begin{align}
    R^{\star}(n,\varepsilon)
    \equiv \sup\left\{R: \exists\, (n,R,\varepsilon)\text{ code}\right\},
\end{align}
which is the ultimate \emph{quantity} of information bits Alice can send to Bob with an error tolerance $\varepsilon$.
The well-known BSST theorem \cite{BSS+99, BSS+02, Hol02} characterizes the channel capacity
\begin{align} \label{eq:first-order}
    C_{\EA}(\mathscr{N})
    \equiv \lim_{\varepsilon\to0}\lim_{n\to\infty} R^{\star}(n,\varepsilon)
    = I_1(\mathscr{N}),
\end{align}
in terms of the single-letter mutual information of channel $\mathscr{N}$:
\begin{align} \label{eq:I_1}
  I_1(\mathscr{N})
  =
  \max_{\rho_{\A}} I_1(\R:\B)_{\omega},
  \quad
  \omega_{\R\B}
  =
  (\operatorname{id}_{\R}\otimes\mathscr N)(\psi_{\R\A}^{\rho}),
\end{align}
where $\psi_{\R\A}^{\rho}$ is a purification of the input state $\rho_{\A}$ and $\R\cong\A$.
The quantity $I_1(\mathscr{N})$ is strong additive under tensor product of channels \cite{BSS+02}:
\begin{align} \label{eq:I_1-aditive}
	{I}_{1}(\mathscr{N}_1 \otimes \mathscr{N}_2) = {I}_{1}(\mathscr{N}_1) + {I}_{1}(\mathscr{N}_2).
\end{align}
The channel capacity $C_{\EA}(\mathscr{N})$ is strongly additive as well.

On the other hand, for a given rate $R$, the minimum error probability is defined as
\begin{align}
    \varepsilon^{\star}(n,R)
    \equiv \inf\left\{\varepsilon: \exists\,(n,R,\varepsilon)\text{ code}\right\},
\end{align}
determining the ultimate \emph{quality} of EA-communication.
For transmission rates exceeding capacity, $R > C_{\mathrm{EA}}(\mathscr{N})$, Gupta and Wilde \cite{GW14} and Li and Yao \cite{LY24b} demonstrated that the success probability decays exponentially as
\begin{align}
1- \varepsilon^{\star}(n,R) &\simeq 2^{-n E_{\mathrm{sc}}(R;\mathscr{N})},
\label{eq:strong_converse}
\\
E_{\mathrm{sc}}(R;\mathscr{N})
&\equiv\sup_{\alpha>1} \frac{\alpha-1}{\alpha} \left[ R - \widetilde{I}_{\alpha}(\mathscr{N}) \right].
\label{eq:strong_converse_exponent}
\end{align}
Here, the strong converse exponent is determined by the difference between the rate $R$ and the sandwiched \Renyi information of $\mathscr{N}$ for $\alpha >1$,
\begin{align} \label{eq:defn:sandwiched_info}
\widetilde{I}_{\alpha}(\mathscr{N})
\equiv \max_{\rho_{\A}} \min_{\sigma_{\B}} \widetilde{D}_{\alpha} \left( \mathscr{N}(\psi_{\R\A}^{\rho}) \Vert \rho_{\R} \otimes \sigma_{\B} \right),
\end{align}
where $\widetilde{D}_{\alpha}$ is the sandwiched \Renyi relative entropy \cite{MDS+13, WWY14}.
The sandwiched \Renyi information generalizes $I_1(\mathscr{N})$ to a parametric family and coincides $I_1(\mathscr{N})$ when $\alpha \to 1$.
Moreover, Gupta and Wilde \cite{GW14} proved its additivity:
\begin{align} \label{eq:sadwiched-aditive_>1}
    \widetilde{I}_{\alpha}(\mathscr{N}_1 \otimes \mathscr{N}_2) = \widetilde{I}_{\alpha}(\mathscr{N}_1) + \widetilde{I}_{\alpha}(\mathscr{N}_2), \quad \forall\,\alpha>1,
\end{align}
an underlying property that guarantees a single-letter formula for the strong converse bound and establishes its weak additivity for rates above capacity, i.e.,
\begin{align} \label{eq:E_sc-additive}
    E_{\mathrm{sc}}(2R;\mathscr{N}^{\otimes 2})
    = 2 E_{\mathrm{sc}}(R;\mathscr{N}).
\end{align}

\section{Error exponent and \\Petz--Rényi Information} \label{sec:Petz-Renyi}

For transmission rates below capacity, the convergence rate of the decoding error probability dictates the \emph{quality} of reliable entanglement-assisted communication, making it the central operational quantity of interest. Recently, it was established \cite{preparation} (see prior findings in \cite{QWW18, Cheng_simple}) that
\begin{align}
    \varepsilon^{\star}(n,R)
    &\leq 1.5 \cdot 2^{-E_{\mathrm{r}}(nR;\mathscr{N}^{\otimes n}) },
    \label{eq:random-coding_bound}
    \\
    E_{\mathrm{r}}(R;\mathscr{N}) &\equiv \max_{\sfrac{1}{2}\leq \alpha < 1} \frac{1-\alpha}{\alpha} \left[ I_{\alpha}(\mathscr{N}) - R \right],
    \label{eq:random_coding_exponent}
\end{align}
Fundamentally, it is generally anticipated that the exponent $\frac{1}{n}E_{\mathrm{r}}(nR;\mathscr{N}^{\otimes n})$ to be not merely an achievable bound, but exactly tight, at any transmission rate $R$ above the so-called critical rate $R_{\mathrm{crit}}$, completely characterizing the asymptotic error decay.
This expectation is firmly grounded in its rigorously proven asymptotic tightness for classical channels \cite{SGB67}, classical-quantum channels \cite{Dal13, CHT19}, and covariant quantum channels \cite{SNC26}.

Unlike \eqref{eq:strong_converse_exponent}, the random coding exponent $E_{\mathrm{r}}(R;\mathscr{N})$ is instead characterized by the \emph{Petz--\Renyi information} of order $\alpha \in [\sfrac12,1]$:
\begin{align}
{I}_{\alpha}(\mathscr{N})
&\equiv \max_{\rho_{\A}} \min_{\sigma_{\B}} {D}_{\alpha} \left( \mathscr{N}(\psi_{\R\A}^{\rho}) \Vert \rho_{\R} \otimes \sigma_{\B} \right)
\\
&\!\!\overset{\text{\footnotesize\cite{HT14}}}{=} \max_{\rho_{\A}}  \frac{\alpha}{\alpha-1} \log \Tr\left[ \left( \Tr_{\R}[\rho_{\R}^{1-\alpha} \omega_{\R\B}^{\alpha} ] \right)^{1/\alpha} \right],
\label{eq:objective}
\end{align}
where $D_{\alpha}$ is the Petz--\Renyi relative entropy \cite{Pet86}.
Both $\widetilde{I}_{\alpha}(\mathscr{N})$ and $I_{\alpha}(\mathscr{N})$ are non-decreasing in $\alpha$ and converge to $I_1(\mathscr{N})$ as $\alpha\to1$ \cite{CMWO14}.
The associated exponent functions are depicted in Figure~\ref{fig:error_exponents}.

 \begin{figure}[htbp]
	     \centering
	     \begin{tikzpicture}
		     \begin{axis}[
			         axis lines = left,
			         xlabel = {Rate $R$},
			         ylabel = {Exponents},
			         ymin = 0, ymax = 2.8,
			         xmin = 0, xmax = 4.5,
			         xtick = {1, 2, 3, 4},
			         xticklabels = { , $R_{\mathrm{crit}}$, \quad$C_{\mathrm{EA}}(\mathscr{N}) = I_1(\mathscr{N})$, },
			         ytick = {0.5, 1.0, 1.5, 2.0, 2.5},
			         yticklabels = {\empty}, 
			         grid = major,
			         grid style = {dashed, gray!40},
			         every axis x label/.style={at={(ticklabel* cs:1.02)}, anchor=west},
			         every axis y label/.style={at={(ticklabel* cs:1.05)}, anchor=south},
			         width=8.6cm, height=6.5cm,
			         clip=false,
			         samples=100
			     ]
			
			     \addplot [
			         domain=0:2,
			         thick,
			         blue
			     ] {2.5 - x};
			
			     \addplot [
			         domain=2:3,
			         thick,
			         blue
			     ] {0.5*(3-x)^2};
			
			     \addplot [
			         domain=3:4.5,
			         thick,
			         blue
			     ] {0};
			
			     \addplot [
			         domain=0:3,
			         thick,
			         red
			     ] {0};
			
			     \addplot [
			         domain=3:4.5,
			         thick,
			         red
			     ] {0.5*(x-3)^2};
			
			     \filldraw[blue] (axis cs:2,0.5) circle (1.5pt);
			     \filldraw[black] (axis cs:3,0) circle (1.5pt);
			
			     \node[blue, anchor=south west] at (axis cs:0.2, 2.3) {$E_{\mathrm{r}}(R;\mathscr{N})$};
			     \node[red, anchor=north west] at (axis cs:3.4, 1.3) {$E_{\mathrm{sc}}(R;\mathscr{N})$};
			
			     \end{axis}
		     \end{tikzpicture}
	     \caption{Illustration of the random coding exponent $E_{\mathrm{r}}(R;\mathscr{N})$ in \eqref{eq:random_coding_exponent} and the strong converse exponent $E_{\mathrm{sc}}(R;\mathscr{N})$ in \eqref{eq:strong_converse_exponent} a function of the rate $R$; both indicate the quality of entanglement-assisted communication.
		     The latter $E_{\mathrm{sc}}(R;\mathscr{N})$ is additive \cite{GW14}, while we show that $E_{\mathrm{r}}(R;\mathscr{N})$ can be strictly superadditive for any $R<I_1(\mathscr{N})$.
		     Here, $I_1(\mathscr{N})$ is additive and quantifies the maximum achievable rate with vanishing errors.}
	     \label{fig:error_exponents}
	 \end{figure}

Given that joint entanglement across multiple channel uses do not increase the channel capacity, i.e., the  additivity of $I_1(\mathscr{N})$ in \eqref{eq:I_1-aditive}, and recalling the additivity of $\widetilde{I}_{\alpha}(\mathscr{N})$ in \eqref{eq:sadwiched-aditive_>1} for $\alpha>1$ and for $\alpha\in[\sfrac12,1)$ by Li--Xu \cite{li2026completely},
it is natural to expect the following question:
\begin{align} \tag{Q1} \label{eq:Q:supperadditivity} 
	\textit{Is Petz--\Renyi $I_{\alpha}(\mathscr{N})$, $\alpha \in [\sfrac12,1)$, additive too?}
\end{align}
Operationally, this corresponds to the question:
\begin{align} \tag{Q2} \label{eq:Q:quality} 
	\begin{minipage}{0.42\textwidth}
	\textit{{Can joint correlations across multiple channel uses enhance the quality of EA-communication?}}
\end{minipage}
\end{align}

Indeed, we show that $I_{\alpha}(\mathscr{N})$ is equivalent to a convex optimization, as opposed to the non-convex optimizations of the
(unassisted) Holevo capacity $\chi(\mathscr{N})$ and  quantum channel capacity $Q(\mathscr{N})$.

\begin{proposition}[Convex optimization reduction]  \label{prop:concavity}
$I_{\alpha}(\mathscr{N})$ given in \eqref{eq:objective} is equivalent to a convex optimization.
\end{proposition}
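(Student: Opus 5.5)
The plan is to rewrite the objective in \eqref{eq:objective} as a function of the input state $\rho_{\A}$ alone and show that, after a monotone reparametrization, it is a concave function of $\rho_{\A}$. Since $\alpha\in[\sfrac12,1)$ gives $\frac{\alpha}{\alpha-1}<0$ and $\log$ is increasing, maximizing $I_\alpha$ is equivalent to minimizing
\begin{align}
F(\rho_{\A}) \equiv \Tr\left[ \left( \Tr_{\R}[\rho_{\R}^{1-\alpha} \omega_{\R\B}^{\alpha} ] \right)^{1/\alpha} \right]
\end{align}
over density operators. It therefore suffices to prove that $F$ (or $F^{\alpha}$) is convex in $\rho_{\A}$; minimizing a convex function over the convex set of states is a convex program.

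The first step is to eliminate the purification. I would write $\psi^{\rho}_{\R\A} = (\sqrt{\rho}^{T}\otimes \I)\Phi(\sqrt{\rho}^{T}\otimes\I)$ with $\Phi$ the unnormalized maximally entangled operator, so $\omega_{\R\B} = \sqrt{\rho}^T J_{\mathscr N}\sqrt{\rho}^T$, where $J_{\mathscr N}$ is the Choi operator and $\rho_\R=\rho^T$. Then $\omega^\alpha$ and $\rho_{\R}^{1-\alpha}$ both involve $\rho$ nonlinearly, and the goal is to find a substitution that makes the dependence jointly concave or convex. The natural substitution is $X \equiv \rho^{T}$ and the Stinespring isometry $V$ of $\mathscr{N}$. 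With $\omega_{\R\B}=\Tr_{\mathsf E}[\,|\phi\rangle\langle\phi|\,]$, $|\phi\rangle=(\sqrt{X}\otimes V)|\Phi\rangle$, I would use the standard complementary identity $\omega_{\R\B}^\alpha$ versus $\omega_{\mathsf E}$ (nonzero spectra coincide) to transfer the expression to the environment. This gives a sibson-type form $\Tr_\R[\rho_\R^{1-\alpha}\omega^\alpha_{\R\B}]$ that can be rewritten as $\Tr_{\R}$ of a Petz-type mean $X^{1-\alpha}\#\,\omega^\alpha$-like expression. The target is to express $F(\rho)^{\alpha}$, or a variational form of it, as $\sup_{\sigma_\B}$ (or $\inf$) of $\Tr[\rho_\R^{1-\alpha}\,\omega_{\R\B}^{\alpha}\,\sigma_\B^{1-\alpha}]$-type trace functionals. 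These are jointly concave in their arguments by Lieb's concavity theorem for $\alpha\in[\sfrac12,1)$, since the exponents $1-\alpha$, $\alpha$ and $1-\alpha$ are admissible.

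Concretely, I would use the variational formula $\Tr[Y^{1/\alpha}] = \max_{\sigma\ge0,\Tr\sigma=1}(\Tr[Y\sigma^{(1-\alpha)/\alpha\cdot\ldots}])^{\ldots}$, i.e. the H\"older dual $\|Y\|_{1/\alpha}=\max_{\|Z\|_{1/(1-\alpha)}\le1}\Tr[YZ]$. This turns $F^{\alpha}$ into $\max_{\sigma_\B}\Tr[\rho_\R^{1-\alpha}\omega_{\R\B}^{\alpha}\sigma_\B^{1-\alpha}]$, which matches the min over $\sigma_{\B}$ in the definition via \cite{HT14}. Together with the outer max over $\rho$, the problem becomes a saddle point. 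The remaining task is to show that for fixed $\sigma_\B$ the map $\rho\mapsto \Tr[\rho_\R^{1-\alpha}\omega_{\R\B}^{\alpha}\sigma_\B^{1-\alpha}]$ is concave. Since $\rho$ enters both $\rho_\R$ and $\omega$, I would instead switch to the variable $\psi$ and use $\omega_{\R\B}^\alpha$ expressed through $\sqrt{\rho}$ conjugations. Then $\Tr[(\rho^T)^{1-\alpha}(\sqrt{\rho}^TJ\sqrt{\rho}^T)^{\alpha}\sigma^{1-\alpha}]$ is handled by the identity $(\sqrt{\rho}J\sqrt{\rho})^\alpha=\sqrt{\rho}J^{1/2}(J^{1/2}\rho J^{1/2})^{\alpha-1}J^{1/2}\sqrt{\rho}$, which reduces it to a Lieb/Ando-type trace functional of $\rho$ and $\sigma$, jointly concave.

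The hard step is exactly this last concavity. The variable $\rho$ appears in three places, and getting the exponents into a form where Lieb's or Ando's theorem applies is nontrivial; it is where the restriction $\alpha\ge\sfrac12$ should be needed. If the direct identity fails, I would fall back on $\Tr_\R[\rho_\R^{1-\alpha}\omega^\alpha]$ being the partial trace of a weighted geometric-mean-like object in the pair $(\rho_\R\otimes\I,\ \omega)$. I would then use the joint concavity of $(A,B)\mapsto A^{1-\alpha}\otimes B^{\alpha}$ restricted to the diagonal together with operator concavity of $t\mapsto t^{\alpha}$ to conclude convexity of the minimization. Once concavity of the inner functional is established, Sion's minimax theorem makes the max–min a convex–concave saddle problem, which completes the reduction.
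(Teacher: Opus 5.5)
There is a genuine gap, and it is at the step you yourself flag as hard. Your monotone reduction and the H\"older step are fine. Set $G(\rho,\sigma_{\B})\equiv\Tr[\rho_{\R}^{1-\alpha}\omega_{\R\B}^{\alpha}\sigma_{\B}^{1-\alpha}]$. Then $F(\rho)^{\alpha}=\max_{\sigma_{\B}}G(\rho,\sigma_{\B})$, so the program is $\min_{\rho}\max_{\sigma_{\B}}G$; there is no ``outer max over $\rho$''.

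This formulation needs $G(\cdot,\sigma_{\B})$ to be \emph{convex} in $\rho$. Concavity in $\rho$, which is what you set out to prove, would not help, because a pointwise maximum of concave functions is not convex in general. Moreover, that concavity is false. For the identity channel $\omega_{\R\B}$ is pure, so $\omega^{\alpha}=\omega$, and $\Tr_{\R}[\rho_{\R}^{1-\alpha}\omega_{\R\B}]$ equals $\rho^{2-\alpha}$ up to transposition. Taking $\sigma_{\B}=\I/d$ gives $G=d^{\alpha-1}\Tr[\rho^{2-\alpha}]$, which is strictly convex since $2-\alpha\in(1,2)$.

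Lieb's concavity theorem also has nothing to act on here. The reason is that $\omega=\sqrt{\rho}^{\top}\Gamma\sqrt{\rho}^{\top}$ (with $\Gamma$ your $J$) is not affine in $\rho$. The same non-affinity defeats your fallback via operator concavity of $t\mapsto t^{\alpha}$. Finally, the functional you obtain after the polar identity is not ``jointly concave'' in $(\rho,\sigma_{\B})$: it is convex in $\rho$ and concave in $\sigma_{\B}$.

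The fix is already in your proposal. The identity $(\sqrt{\rho}\,\Gamma\sqrt{\rho})^{\alpha}=\sqrt{\rho}\sqrt{\Gamma}(\sqrt{\Gamma}\rho\sqrt{\Gamma})^{\alpha-1}\sqrt{\Gamma}\sqrt{\rho}$ is exactly the paper's key step; you only misread which exponents it produces. Combined with cyclicity of $\Tr_{\A}$ in the $\A$ factor and the transpose trick, it gives
\begin{align*}
\Tr_{\A}\big[\rho^{1-\alpha}(\sqrt{\rho}\,\Gamma\sqrt{\rho})^{\alpha}\big]
=\langle\tilde{\Phi}|_{\A\bar{\A}}\,\sqrt{\Gamma}\,X^{\alpha-1}\sqrt{\Gamma}\otimes Y^{2-\alpha}\,|\tilde{\Phi}\rangle_{\A\bar{\A}},
\qquad X=\sqrt{\Gamma}\rho\sqrt{\Gamma},\quad Y=\rho_{\bar{\A}}^{\top}.
\end{align*}
Here $X$ and $Y$ are affine in $\rho$, and the exponent pair is $(p,1-p)$ with $p=\alpha-1\in(-1,0)$. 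That is the range of Ando's joint \emph{convexity} of $(X,Y)\mapsto X^{p}\otimes Y^{1-p}$, not Lieb's concavity.

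Compressing with the positive map $Z\mapsto\langle\tilde{\Phi}|\sqrt{\Gamma}Z\sqrt{\Gamma}|\tilde{\Phi}\rangle$ then makes $\Tr_{\R}[\rho_{\R}^{1-\alpha}\omega_{\R\B}^{\alpha}]$ operator convex in $\rho$. Since $\Tr[(\cdot)^{1/\alpha}]$ is convex and monotone for $\alpha<1$, the function $F(\rho)=\Tr\big[(\Tr_{\R}[\rho_{\R}^{1-\alpha}\omega_{\R\B}^{\alpha}])^{1/\alpha}\big]$ is convex directly.

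This is the paper's proof. It needs no variational formula, no minimax theorem, and no Stinespring or environment transfer. It also needs no restriction $\alpha\geq\sfrac{1}{2}$: it works for all $\alpha\in(0,1)$ and yields concavity for $\alpha\in(1,2]$. If you want to keep your H\"older formulation, the same operator convexity makes each $G(\cdot,\sigma_{\B})$ convex in $\rho$, so $\max_{\sigma_{\B}}G$ is convex as well.
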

\noindent We defer the detailed proof to Appendix~\ref{sec:properties}. Moreover, additivity of $I_{\alpha}(\mathscr{N})$ does indeed hold for some channels.

\begin{proposition}[Additivity for special channels] \label{prop:additivity_special}
$I_{\alpha}(\mathscr{N})$, $\alpha \in (0,1)$, is strongly additive for\vspace{-6pt}
\begin{enumerate}
    \item isometric channels,\vspace{-6pt}
    \item projective measurement channels,\vspace{-6pt}
    \item covariant quantum channels,\vspace{-6pt}
    \item and classical-quantum channels.
\end{enumerate}
\end{proposition}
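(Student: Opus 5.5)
Superadditivity $I_\alpha(\mathscr N_1\otimes\mathscr N_2)\ge I_\alpha(\mathscr N_1)+I_\alpha(\mathscr N_2)$ is the easy direction. Take product inputs $\rho_1\otimes\rho_2$. Then $\omega$ and $\rho_{\R}$ factorize, so the trace in \eqref{eq:objective} factorizes and its logarithm splits into a sum. For $\alpha\in(0,1)$ the prefactor $\frac{\alpha}{\alpha-1}$ is negative, so the maximization in \eqref{eq:objective} amounts to minimizing $Q(\rho)\equiv\Tr[(\Tr_{\R}[\rho_{\R}^{1-\alpha}\omega_{\R\B}^{\alpha}])^{1/\alpha}]$. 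Hence it suffices, for each of the four channel classes, to exhibit an optimal product input. Equivalently, we need $\min Q$ over joint inputs to be multiplicative.

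\emph{Plan for each class.}

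(1) \textbf{Isometric channels.} Here $\omega_{\R\B}$ is pure. Writing $\omega=V\psi V^\dagger$, we get $\omega^\alpha=\omega$ and $\Tr_{\R}[\rho_{\R}^{1-\alpha}\omega]=V\rho^{2-\alpha}V^\dagger$ up to transpose. Thus $Q=\Tr[\rho^{(2-\alpha)/\alpha}]$ with $(2-\alpha)/\alpha>1$. This is minimized by the maximally mixed input, which is a product state.

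(4) \textbf{Classical--quantum channels.} The channel first measures in a fixed basis and then prepares $\sigma_x$. Here I would reduce, via the dephasing in the first step, to diagonal inputs $p$. I expect $Q$ to become $\Tr[(\sum_x p(x)^{1-\alpha}\cdot p(x)^\alpha\sigma_x^\alpha)^{1/\alpha}]$, which needs checking because the purification yields a block structure. Sibson's identity should then give $\Tr[(\sum_x p(x)\sigma_x^\alpha)^{1/\alpha}]$. This is the classical-quantum Gallager-type quantity, whose minimum over $p$ is known to be multiplicative: by Sion's minimax with the variational form $\min_\sigma\sum_x p(x)\Tr[\sigma_x^\alpha\sigma^{1-\alpha}]$, the saddle point of a product channel is a product.

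(2) \textbf{Projective measurement channels.} These are a special case of (4) where the $\sigma_x$ are orthogonal pure classical states, so the claim follows directly, or alternatively by the same diagonal computation.

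(3) \textbf{Covariant channels.} Use the convexity from Proposition~\ref{prop:concavity}: the objective after reparametrization is convex and unitarily covariant. Averaging over the group therefore shows that the maximally mixed input is optimal for $\mathscr N$, for each factor, and for the tensor product channel, which is covariant under the product group acting irreducibly. The maximally mixed state on the product is a product, so additivity follows.

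\emph{Main obstacle.} The reduction in (4) of a general, possibly entangled, two-copy input to a diagonal one requires care. I would argue that a cq channel satisfies $\mathscr N=\mathscr N\circ\Delta$ with $\Delta$ the dephasing map. Data processing on $\R$ is then combined with the convexity of Proposition~\ref{prop:concavity}: averaging the input over the diagonal phase group leaves the output unchanged and cannot worsen the convex objective, so diagonal inputs are optimal. The irreducibility claim in (3) for the product group (a tensor product of irreducible representations is irreducible for the product group) is routine.
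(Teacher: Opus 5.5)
Your proof is correct in outline. For items 1, 3 and 4 it is essentially the paper's argument:
\begin{itemize}
\item product inputs give superadditivity;
\item for isometries, $Q=\Tr[\rho^{(2-\alpha)/\alpha}]$ is minimized by the maximally mixed state;
\item for covariant channels, twirling plus the convexity of Proposition~\ref{prop:concavity} forces $\I/d$;
\item for classical--quantum channels, a Sion minimax swap followed by a product dual variable gives subadditivity.
\end{itemize}
Your variational form is the paper's Schatten-duality step with $Z_{\B}=\sigma^{1-\alpha}$, since $\lambda_{\min}(\Tr_{\B}[Z_{\B}\Gamma^{\alpha}])=\min_x \Tr[Z_{\B}\sigma_x^{\alpha}]$ for $\Gamma=\sum_x|x\rangle\langle x|\otimes\sigma_x$.

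One slip must be fixed. For $\alpha\in(0,1)$, H\"older with exponents $1/\alpha$ and $1/(1-\alpha)$ gives
\[
\Bigl(\Tr\Bigl[\bigl(\textstyle\sum_x p(x)\sigma_x^{\alpha}\bigr)^{1/\alpha}\Bigr]\Bigr)^{\alpha}
=\max_{\sigma}\sum_x p(x)\Tr\bigl[\sigma_x^{\alpha}\sigma^{1-\alpha}\bigr].
\]
This is a maximum, not a minimum. Minimizing this concave function of $\sigma$ lands on pure states and gives $\lambda_{\min}$ instead. With the max, the objective is linear in $p$ and concave in $\sigma$, so Sion yields $\min_p\max_\sigma=\max_\sigma\min_x$. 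Inserting $\sigma_1^\star\otimes\sigma_2^\star$ for the product channel then gives $\min Q_{12}\ge \min Q_1\cdot\min Q_2$. This is subadditivity because $\alpha/(\alpha-1)<0$. You only need the product $\sigma$ to be feasible, not to be the saddle point. Also, no Sibson identity is needed for the diagonal formula: $\omega^{\alpha}=\sum_x p(x)^{\alpha}|x\rangle\langle x|\otimes\sigma_x^{\alpha}$ directly.

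The genuine differences from the paper are in items 2 and 4.

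\emph{Item 2.} For projective measurements the paper computes $I_\alpha=\log|\Y|$ outright. It uses the Liu--Cheng Araki-type trace inequality for the upper bound and an input supported on vectors fixed by the projectors for the lower bound. Additivity is then read off from the closed form, and this also covers $\alpha\in(1,2]$. Your reduction to item 4 avoids the trace inequality and unifies the two cases. It does, however, require choosing an orthonormal basis adapted to $\I=\sum_y P_y$. The outputs are then $\sigma_{(y,k)}=|y\rangle\langle y|$, which repeat when projectors have rank larger than one, so the $\sigma_x$ need not be distinct or mutually orthogonal as you state.

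\emph{Item 4.} The paper defines classical--quantum channels to have diagonal inputs (its commuting-input condition), so no reduction is needed. Your phase-twirl-plus-convexity step handles arbitrary, possibly entangled, inputs to a measure-and-prepare channel, which is a slightly stronger statement. The invariance you use there comes from $[Z\otimes\I,\Gamma]=0$, not from data processing on $\R$.
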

\noindent We defer the detailed proof to Appendix~\ref{sec:properties}.

\section{Strict Superadditivity}

In quantum information, a favorable convex optimization landscape typically ensures that fundamental quantities remain computable via single-letter formulas.
Unexpectedly, the Petz--\Renyi information $I_{\alpha}(\mathscr{N})$ possesses fundamentally distinct behaviors from the sandwiched \Renyi information $\widetilde{I}_{\alpha}(\mathscr{N})$.
Indeed, we prove that $I_{\alpha}(\mathscr{N})$ can exhibit strict superadditivity for any $\alpha \in [\sfrac12,1)$, which falsifies \eqref{eq:Q:supperadditivity} in general.
This finding answers \eqref{eq:Q:quality} in the affirmative---although joint preshared entanglement does not increase the ultimate \emph{quantity} of EA-communication, it in general enhance the communication \emph{quality} for certain channels.
Namely, as opposed to the additivity of the strong converse exponent in \eqref{eq:E_sc-additive}, 
\begin{align} \label{eq:E_r-supperadditive}
	\exists\, \mathscr{N}:
	E_{\mathrm{r}}(2R;\mathscr{N}^{\otimes 2})
	> 2 E_{\mathrm{r}}(R;\mathscr{N}), \quad \forall\, R<C_{\mathrm{EA}}(\mathscr{N}).
\end{align}

\emph{Entanglement-breaking}. Our first example is the single-heavy Fourier measurement $\mathscr{M}_{\A\to \Y}$ with rank-one effects
	\begin{align}  \label{eq:Fourier_POVM0}
		M_{\A}^y&=\frac{1}{\lambda d_{\A}} |v_y\rangle\!\langle v_y|,
		\qquad
		y=0,\ldots,d_{\A}-1,
	\end{align}
	where $\lambda\in(1/d_{\A},1)$, $a_0=\lambda$, $a_1=\cdots=a_{d_{\A}-1}=\frac{1-\lambda}{d_{\A}-1}$,
	$|v_y\rangle
	=
	\sum_{j=0}^{d_{\A}-1}\sqrt{a_j}\,\mathrm{e}^{2\pi \mathrm{i}jy/d_{\A}}|j\rangle$, and the residual effect $M_{\A}^{\infty} = \I_{\A}-\sum_{j=0}^{d_{\A}-1} M_{\A}^j\geq 0_{\A}$.

\begin{theorem}[Strict superadditivity for measurement channels]
	\label{thm:strict-superadditivity}
	For every \(d_{\A}\ge3\), every
	\(\lambda\in(1/d_{\A},1)\), and every \(0<\alpha<1\), the Fourier measurement defined in \eqref{eq:Fourier_POVM0}  satisfies
	\begin{align}
		I_{\alpha}(\mathscr{M}^{\otimes 2}) > 2 I_{\alpha}(\mathscr{M})
		\qquad
		\forall\,0<\alpha<1.
	\end{align}
\end{theorem}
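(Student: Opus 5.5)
The plan is to compute $I_{\alpha}(\mathscr{M})$ in closed form, and then show that the product of single-letter optimizers is not optimal for $\mathscr{M}^{\otimes2}$. For that second step I will use a first-order (KKT) violation along a classically correlated perturbation.

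\emph{Step 1: the objective for measurement channels.} For a measurement channel the output is classical, so $\omega_{\R\Y}=\bigoplus_y \omega_y\otimes|y\rangle\!\langle y|$ with $\omega_y=\sqrt{\rho}\,\overline{M^y}\sqrt{\rho}$, where the bar denotes the transpose in the purification basis. For a rank-one effect $M^y=|m_y\rangle\!\langle m_y|$ we have $\omega_y^{\alpha}=\langle m_y|\rho|m_y\rangle^{\alpha-1}\omega_y$. Hence $\Tr[\rho^{1-\alpha}\omega_y^{\alpha}]=\langle m_y|\rho^{2-\alpha}|m_y\rangle\,\langle m_y|\rho|m_y\rangle^{\alpha-1}$, and the quantity in \eqref{eq:objective} becomes
\begin{align}
F_\alpha(\rho)=\sum_{y}\Big(\langle m_y|\rho^{2-\alpha}|m_y\rangle\langle m_y|\rho|m_y\rangle^{\alpha-1}\Big)^{1/\alpha}+F^{\infty}_\alpha(\rho).
\end{align}
Here $F^{\infty}_\alpha$ is the contribution of the residual effect. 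Since $\alpha/(\alpha-1)<0$, maximizing $I_\alpha$ means minimizing $F_\alpha$.

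\emph{Step 2: the single-letter optimum.} Two observations simplify the problem. First, $\sum_y|v_y\rangle\!\langle v_y|=d_{\A}\,\mathrm{diag}(a)$, so $M^{\infty}=\mathrm{diag}(1-a_j/\lambda)$ is diagonal and vanishes on $|0\rangle$. Second, the phase unitaries $Z^k$ cyclically permute the $|v_y\rangle$ and fix $M^\infty$. By the convex reformulation of Proposition~\ref{prop:concavity}, twirling over $\{Z^k\}$ does not increase the objective, so the optimizer can be taken diagonal, $\rho=\mathrm{diag}(p)$. Then every $y$ contributes equally, with $\langle v_y|\rho^{s}|v_y\rangle=\sum_j a_jp_j^{s}$. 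The single-copy problem thus reduces to an explicit minimization over the simplex. By the symmetry $j\leftrightarrow j'$ for $j,j'\ge1$, we may further set $p=(p_0,\frac{1-p_0}{d_{\A}-1},\dots)$, leaving a one-variable problem with optimizer $p_0^\star$. This yields $\rho^\star$ and the value $F_\alpha(\rho^\star)$.

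\emph{Step 3: breaking additivity.} Since $I_\alpha(\mathscr{M}^{\otimes2})\ge 2I_\alpha(\mathscr{M})$ trivially via $\rho^\star\otimes\rho^\star$, it suffices to show that $\rho^\star\otimes\rho^\star$ is not a stationary point of the two-copy convex program. I would compute the directional derivative $\frac{d}{dt}F^{(2)}_\alpha\big(\rho^\star\otimes\rho^\star+t\Delta\big)\big|_{t=0}$ for a traceless direction $\Delta$ with vanishing marginals. Vanishing marginals mean any change cannot be captured single-letter. A natural candidate is the classically correlated perturbation
\begin{align}
\Delta=\tfrac12\big(\tau_+\otimes\tau_++\tau_-\otimes\tau_-\big)-\tfrac12\big(\tau_+\otimes\tau_-+\tau_-\otimes\tau_+\big),
\end{align}
with $\tau_\pm$ built from off-diagonal coherences between $|0\rangle$ and the light levels, for instance $\pm(|0\rangle\!\langle j|+|j\rangle\!\langle 0|)$ suitably symmetrized. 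The state $\rho^\star\otimes\rho^\star+t\Delta$ remains a mixture of product states for small $t$, which gives the separable, classically correlated witness.

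The derivative splits over output pairs $(y_1,y_2)$. Because $\Delta$ has zero marginals, the terms linear in a single copy cancel. What survives is a cross term proportional to $\sum_{y_1,y_2}c_{y_1}c_{y_2}\,g'(\cdot)$, where the phases $e^{2\pi\mathrm{i}(j y)/d_{\A}}$ enter. At first order this may vanish, in which case I would go to second order. Second order is still decisive for a convex problem at a candidate minimum: exhibiting a negative second derivative along a feasible direction shows $\rho^\star\otimes\rho^\star$ is not a minimizer.

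\emph{Main obstacle.} I expect the hard part to be this sign computation, uniformly in $d_{\A}\ge3$, $\lambda\in(1/d_{\A},1)$ and $\alpha\in(0,1)$. The nonlinearity $(\cdot)^{1/\alpha}$ couples the two copies, and the requirement $d_{\A}\ge3$ suggests the cancellation of Fourier phases $\sum_y e^{2\pi\mathrm{i}ky/d_{\A}}$ is what makes the cross term nonzero, since for $d_{\A}=2$ the phases are real. I would first try to reduce the sign to a strict inequality between power means of $\{a_jp_j^{\star}\}$, using that $a_0=\lambda>a_j$ so the weights are nonconstant. Strict Hölder/Jensen would then give strictness for every $\alpha<1$.
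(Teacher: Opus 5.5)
\textbf{Verdict: genuine gap.} Your Step 3 direction provably yields nothing, at any order, and the proposal has no working certificate of strictness.

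\textbf{Why the first derivative vanishes.} Under either reading of $\tau_\pm$, your $\Delta$ is proportional to $\tau\otimes\tau$, where $\tau$ has zero diagonal in the computational basis. The two-copy objective $F^{(2)}_\alpha$ is invariant under conjugation by $Z^{k_1}\otimes Z^{k_2}$, since each factor permutes the transposed Fourier effects and fixes $M^\infty$. These unitaries also fix $\rho^\star\otimes\rho^\star$. Averaging over them shows that the directional derivative at the product point depends only on $(\mathcal P\otimes\mathcal P)(\Delta)$, where $\mathcal P$ is the computational-basis pinching. Since $(\mathcal P\otimes\mathcal P)(\tau\otimes\tau)=\mathcal P(\tau)\otimes\mathcal P(\tau)=0$, the first-order term vanishes identically, not merely possibly.

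\textbf{Why second order cannot help.} By Proposition~\ref{prop:concavity}, which you already invoked in Step 2, the two-copy objective is convex. Hence $f(s)=F^{(2)}_\alpha(\rho^\star\otimes\rho^\star+s\Delta)$ satisfies $f(s)\ge f(0)+sf'(0)=f(0)$, and no direction can ever have a negative second derivative. For a convex program, the only possible certificate that $\rho^\star\otimes\rho^\star$ is not a minimizer is a strictly negative first derivative. The symmetry argument above shows this must be sought along a diagonal, i.e.\ classically correlated, perturbation. As a side remark, your perturbed state is separable but not classically correlated, because $\rho^\star$ is not flat on the span of $|0\rangle$ and $|j\rangle$.

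\textbf{What is missing: the diagonal path and its sign.} The paper takes $\pi_\kappa(i,j)=p_ip_j+\kappa u_iu_j$ with $u=(1,-\tfrac1{d-1},\dots,-\tfrac1{d-1})$, the tangent of the one-copy family $t\mapsto p(t)$. Along this path all operators commute, and the four outcome-type contributions factor into one-copy quantities. The one-copy stationarity $q_\alpha'(t_\alpha)=0$ then collapses the response to $c_1(\alpha)=\frac{(\alpha-1)(2-\alpha)}{\alpha}r_\alpha(t_\alpha)^2[\xi(t_\alpha)-\theta_\alpha(t_\alpha)]^2$, and $\xi=\theta_\alpha$ holds iff $t=1/d$.

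\textbf{Where $d_{\A}\ge3$ actually enters.} Strictness hinges on the one-copy minimizer being interior and non-uniform, which your Step 2 never establishes. This is where $d_{\A}\ge3$ is used, not through the Fourier phases being complex: $q_\alpha'(1/d)\propto 1-(d-1)^{1/\alpha-1}$, which is strictly negative only when $d-1>1$. For $d=2$ the uniform point is the minimizer and $c_1=0$. In that case the product state is in fact stationary, and hence by convexity optimal, for the two-copy program.
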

\noindent We defer the detailed proof to Appendix~\ref{sec:q-c}.

A numerical certificated superadditivity is in Figure~\ref{fig:Fourier_POVM}.
\begin{figure}[ht]
	\centering
	\includegraphics[width=1\columnwidth]{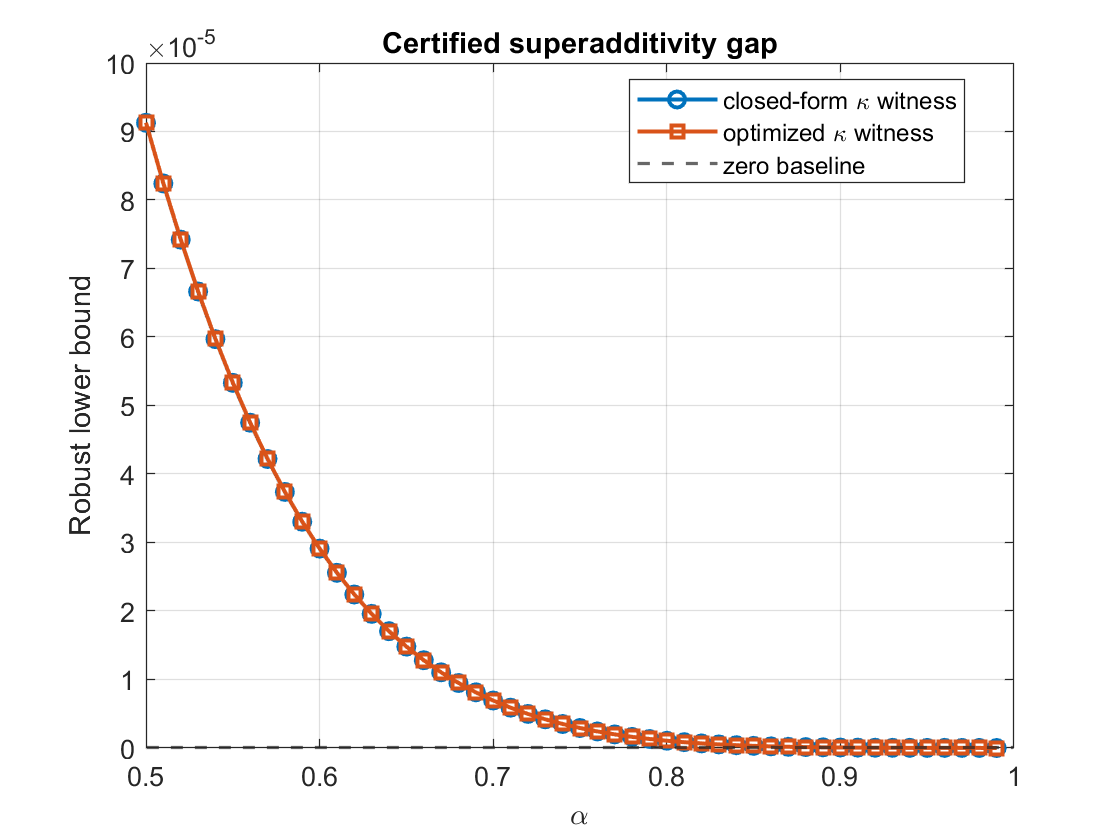}
	\caption{Certifying numerical superadditivity via the single-heavy Fourier measurement defined in \eqref{eq:Fourier_POVM0}  on $\mathbb{C}^4$.
    The $y$-axis plots the lower bound on $I_{\alpha}(\mathscr{M}^{\otimes 2}) - 2 I_{\alpha}(\mathscr{M})$.
	The choices of the $\kappa$ in \eqref{eq:pi-kappa0} for the two-copy case is according to the quadratic approximation in \eqref{eq:pi-kappa0-minimizing}.
	}
	\label{fig:Fourier_POVM}
\end{figure}

Our second example is the amplitude damping channels with Choi matrix:
\begin{align} \label{eq:GAD}
	\Gamma_{\A\B}^{\mathscr{N}}=\begin{pmatrix}
		1&0&0&\sqrt{1-\gamma}\\
		0&0&0&0\\
		0&0&\gamma&0\\
		\sqrt{1-\gamma}&0&0&1-\gamma
	\end{pmatrix}.
\end{align}

\begin{theorem}[Strict superadditivity for amplitude damping channels]
	For the amplitude damping channel $\mathscr{N}$ given in \eqref{eq:GAD} with \(0<\gamma<1\), one has
	\begin{align}
		I_\alpha(\mathscr{N}^{\otimes 2}) > 2 I_{\alpha}(\mathscr{N})
		\qquad
		\forall\,0<\alpha<1.
	\end{align}
\end{theorem}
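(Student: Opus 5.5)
Throughout, I use the variational form \eqref{eq:objective} and write
\begin{align}
F_\alpha(\rho)\equiv\Tr\big[(\Tr_{\R}[\rho_{\R}^{1-\alpha}\omega_{\R\B}^{\alpha}])^{1/\alpha}\big].
\end{align}
Since $\frac{\alpha}{\alpha-1}<0$, maximizing $I_\alpha$ amounts to minimizing $F_\alpha(\rho)$ over input states $\rho_\A$. Strict superadditivity is then equivalent to exhibiting a two-copy input $\rho_{\A_1\A_2}$ with
\begin{align}
F_\alpha^{(2)}(\rho_{\A_1\A_2})<\big(\min_{\rho_\A}F_\alpha(\rho_\A)\big)^2 .
\end{align}
The inequality $\geq$ in the additivity question is automatic from product inputs, so only a strict improvement is needed.

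\textbf{Step 1: the single-copy optimizer.} Write the Choi matrix as $\Gamma^{\mathscr N}=|\Phi\rangle\!\langle\Phi|+\gamma|10\rangle\!\langle10|$ (unnormalized, with $|\Phi\rangle=|00\rangle+\sqrt{1-\gamma}|11\rangle$). The channel is covariant under the phase group $\mathrm{diag}(1,e^{i\phi})$. Because the problem is convex (Proposition~\ref{prop:concavity}), averaging over this group shows the optimum may be taken diagonal, $\rho=\mathrm{diag}(1-p,p)$. In that case
\begin{align}
\omega_{\R\B}=\rho_\R^{1/2}\Gamma\rho_\R^{1/2}=|u\rangle\!\langle u|+\gamma p\,|10\rangle\!\langle10|,\qquad |u\rangle=\sqrt{1-p}|00\rangle+\sqrt{p(1-\gamma)}|11\rangle .
\end{align}
These two components are orthogonal, so $\omega^\alpha$ is explicit. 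Computing $\Tr_\R[\rho_\R^{1-\alpha}\omega^\alpha]$ gives a $2\times2$ matrix on $\B$ with off-diagonal entries, hence $F_\alpha(p)$ is explicit. Call its unique minimizer $p^\star\in(0,1)$; uniqueness follows from strict convexity in the reparametrized convex form. This gives $I_\alpha(\mathscr N)=\frac{\alpha}{\alpha-1}\log F_\alpha(p^\star)$, and by multiplicativity for product inputs $F^{(2)}_\alpha(\rho^\star\otimes\rho^\star)=F_\alpha(p^\star)^2$.

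\textbf{Step 2: a correlated two-copy perturbation.} Following the spirit of the measurement example (a separable, classically correlated witness), I would perturb $\rho^\star\otimes\rho^\star$ along a direction that keeps both one-copy marginals fixed and is invariant under the diagonal phase group:
\begin{align}
\rho_\kappa=\rho^\star\otimes\rho^\star+\kappa\,Z\otimes Z,\qquad Z=\mathrm{diag}(1,-1).
\end{align}
This is a classically correlated diagonal state, and a first-order stationarity argument alone cannot produce a gain. The reason is that $\rho^\star\otimes\rho^\star$ is stationary only against perturbations within the product-preserving directions. For a correlation direction, however, the first derivative of $F^{(2)}_\alpha$ at $\kappa=0$ generally need not vanish, because of the non-commutative $1/\alpha$ power acting on the joint $\B_1\B_2$ output. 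So the plan is to compute
\begin{align}
\left.\frac{d}{d\kappa}F^{(2)}_\alpha(\rho_\kappa)\right|_{\kappa=0}=\Tr\!\big[X^{\frac1\alpha-1}\,\tfrac1\alpha\,\partial_\kappa X\big],\qquad X=\Tr_{\R}\big[\rho_\R^{1-\alpha}\omega^{\alpha}\big].
\end{align}
If this derivative is nonzero, choosing the sign of $\kappa$ gives strict decrease. If it vanishes, which I suspect it does by a tensor-product factorization (every term factorizes as $\Tr[X_1^{\cdot}\partial X_1]\Tr[X_2^{\cdot}]$ times $\Tr Z$-type single-copy quantities, which relate to the one-copy derivative in direction $Z$ that vanishes at $p^\star$), I would instead compute the second-order term along $\kappa$. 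I would then show it is negative by comparing with the single-copy Hessian, with $\kappa$ optimized by the quadratic approximation analogous to \eqref{eq:pi-kappa0-minimizing}.

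\textbf{Main obstacle.} Diagonal correlated inputs may still yield product-like structure, because the output $\omega$ lives on $\R_1\R_2\B_1\B_2$ with $\rho_\R$ no longer a product. Then $\rho_\R^{1-\alpha}$ and $\omega^\alpha$ do not factor, and this is exactly where the gain should come from. The hard part is evaluating the second-order expansion of $\omega_\kappa^\alpha$ via the Daleckii–Krein divided-difference formula on the four-dimensional support. I would exploit the block structure that $\omega$ is a direct sum of a coherent part $|u\rangle$ and incoherent $|10\rangle$ parts, and reduce to scalar divided differences of $t\mapsto t^\alpha$ and $t\mapsto t^{1/\alpha}$.

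Strict negativity should then follow from the strict concavity of $t^\alpha$ for $0<\alpha<1$ against the Jensen-type gap that is present whenever $0<\gamma<1$, since this makes the coherent and incoherent eigenvalues distinct. If the diagonal witness fails, the fallback is to add a coherent correlation $|00\rangle\!\langle11|$ term to $\rho_\kappa$. This term is invariant under the $\phi_1=-\phi_2$ subgroup and aligns with the $\sqrt{1-\gamma}$ coherences of the Choi matrix.
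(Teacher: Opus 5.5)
Your setup matches the paper's: the diagonal reduction by phase twirling plus convexity, and the witness $\rho^\star\otimes\rho^\star+\kappa\,Z\otimes Z$, which is exactly the paper's $\mathrm{diag}(1,-1,-1,1)$ path. The gap is that the proposal stops where the proof has to happen. You never determine the sign of $\partial_\kappa F^{(2)}_\alpha|_{\kappa=0}$. You conjecture that it vanishes, which is false. The second-order fallback is only asserted to be negative on heuristic grounds. So nothing in the plan establishes the strict inequality.

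Your factorization heuristic is where it goes wrong. Write $\rho_t=\mathrm{diag}(t,1-t)$ (your $p=1-t$), $u=1-t$, $\eta=1-\gamma$, and $\ell=\eta+\gamma t$. Then $X(\rho_t)=w_0|0\rangle\langle0|+w_1|1\rangle\langle1|$ with $w_0=t^{2-\alpha}\ell^{\alpha-1}+\gamma^\alpha u$ and $w_1=\eta u^{2-\alpha}\ell^{\alpha-1}$. Contrary to your Step 1, there are no off-diagonal entries, and the one-copy objective is $q_\alpha(t)=w_0^{1/\alpha}+w_1^{1/\alpha}$.

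For diagonal $\rho_\kappa$, the operator $\sqrt{\rho_\kappa}(\Gamma^{\mathscr N})^{\otimes 2}\sqrt{\rho_\kappa}$ is an orthogonal sum of four rank-one branch terms. Hence the two-copy output operator is diagonal on $\B_1\B_2$, and each entry $w_{ij}(\kappa)$ is a sum of terms $c\,p_{xy}(\kappa)^{2-\alpha}\lambda_{rs}(\kappa)^{\alpha-1}$ with branch norms $\lambda_{rs}$ affine in $\kappa$. Neither Daleckii--Krein calculus nor non-commutativity of the $1/\alpha$ power plays any role.

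Because $\kappa$ acts through two different exponents, the response is a sum of two outer products: $\partial_\kappa w_{ij}|_{\kappa=0}=(\alpha-1)a_ia_j+(2-\alpha)b_ib_j$. These use the same two pieces as the one-copy derivative $w_i'=(\alpha-1)a_i+(2-\alpha)b_i$, where
$a_0=\gamma t^{2-\alpha}\ell^{\alpha-2}-\gamma^\alpha$, $a_1=\gamma\eta u^{2-\alpha}\ell^{\alpha-2}$, $b_0=t^{1-\alpha}\ell^{\alpha-1}-\gamma^\alpha$, and $b_1=-\eta u^{1-\alpha}\ell^{\alpha-1}$.
Setting $x_\alpha=\sum_i w_i^{1/\alpha-1}a_i$ and $y_\alpha=\sum_i w_i^{1/\alpha-1}b_i$ gives
\[
\partial_\kappa F^{(2)}_\alpha\big|_{\kappa=0}=\tfrac1\alpha\big[(\alpha-1)x_\alpha^2+(2-\alpha)y_\alpha^2\big],
\qquad
0=q_\alpha'(t_\alpha)=\tfrac1\alpha\big[(\alpha-1)x_\alpha+(2-\alpha)y_\alpha\big].
\]
One-copy stationarity kills the linear combination, not the quadratic form with opposite-sign coefficients. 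Eliminating $y_\alpha$ yields $\partial_\kappa F^{(2)}_\alpha|_{\kappa=0}=\frac{\alpha-1}{\alpha(2-\alpha)}x_\alpha^2\le 0$. So the gain is first order, and small $\kappa>0$ suffices.

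What remains, and is absent from your plan, is the nondegeneracy $x_\alpha\neq 0$. If $x_\alpha=0$, then $y_\alpha=0$. The strictly positive vector $(w_0^{1/\alpha-1},w_1^{1/\alpha-1})$ would then be orthogonal to both $(a_0,a_1)$ and $(b_0,b_1)$, forcing them to be collinear. But a direct computation gives
$a_0b_1-a_1b_0=\eta\gamma\,t^{1-\alpha}u^{1-\alpha}\ell^{\alpha-2}\big[(\gamma t)^{\alpha-1}-\ell^{\alpha-1}\big]>0$,
since $0<\gamma t<\ell$ and $\alpha<1$.

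You also assert without proof that the one-copy minimizer is interior. This is needed both for stationarity and for $\rho_\kappa$ to be a state at all, since $p_{01}=t(1-t)-\kappa$ must stay positive. It follows from $q_\alpha'(0^+)=-(2-\alpha)/\alpha<0$ and $q_\alpha'(1^-)=[2-\alpha+(\alpha-1)\gamma-\gamma^\alpha]/\alpha>0$. Uniqueness of the minimizer, on the other hand, is not needed, and the strict convexity you invoke for it was not established.
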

\noindent We defer the detailed proof to Appendix~\ref{sec:GAD}.

To analytically prove the strict superadditivity in the above two examples, we first employ Proposition~\ref{prop:concavity} to show that the one-copy Petz--\Renyi reduces to a single-parameter convex optimization and the one-copy optimizer $\rho_{\A}^{\star}$ can be chosen as diagonal.
Denoting the trace functional in \eqref{eq:objective} by 
$Q_{\alpha}(\rho) \equiv \Tr\big[ \big( \Tr_{\R}[\rho_{\R}^{1-\alpha} \omega_{\R\B}^{\alpha} ] \big)^{1/\alpha} \big]$,
we choose a correlated diagonal two-copy ansatz as
\begin{align} \label{eq:pi-kappa0}
	\rho_{\A_1\A_2}(\kappa) = \rho_{\A_1}^{\star} \otimes \rho_{\A_2}^{\star} + \kappa \Delta,
\end{align}
where $\Delta$ is a diagonal traceless operator and $\kappa$ is sufficiently small.
For any $\alpha \in (0,1)$, we prove that
\begin{align}
	c_{1}(\alpha) \equiv \left.\frac{\mathrm{d}}{\mathrm{d}\kappa} Q_{\alpha}\left(	\rho_{\A_1\A_2}(\kappa) \right) \right|_{\kappa=0} < 0.
\end{align}
By Taylor's expansion:
\begin{align} \label{eq:pi-kappa0-minimizing}
	Q_{\alpha}\left(\rho_{\A_1\A_2}(\kappa)\right)
	= Q_{\alpha}\left(\rho_{\A}^\star\right)^2 + c_{1}(\alpha) \kappa + \mathcal{O}(\kappa^2),
\end{align}
the strict superadditivity is then witnessed by the infinitesimal diagonal classically
correlated path \(\rho_\alpha(\kappa)\).

\begin{figure*}[th!]
\centering

\def\NatsToBits{1.4426950408889634}

\begin{tikzpicture}
\begin{semilogyaxis}[
  width=0.86\textwidth,
  height=0.48\textwidth,
  xmin=0.5,xmax=1.0,
  ymin=1e-8,ymax=5e-2,
  grid=both,
  xlabel={\(\alpha\)},
  ylabel={gap (bits)},
  legend style={
    at={(0.03,0.97)},
    anchor=north west,
    yshift=-6mm,
    draw=none,
    fill=white,
    fill opacity=0.85,
    text opacity=1
  },
]

\addplot+[
  mark=none,
  very thick,
  dashdotted
] table[
  x=alpha,
  y expr={\NatsToBits*\thisrow{gapSand}}
] {gap_data.dat};

\addlegendentry{
  \(\widetilde I_{1/(2-\alpha)}(\mathscr M)
  -I_\alpha(\mathscr M^{\otimes3})/3\)
}

\addplot+[
  mark=none,
  thick,
  dotted
] table[
  x=alpha,
  y expr={\NatsToBits*\thisrow{gapThree}}
] {gap_data.dat};

\addlegendentry{
  \(I_\alpha(\mathscr M^{\otimes3})/3-I_\alpha(\mathscr M)\)
}

\addplot+[
  mark=none,
  thick,
  dashed
] table[
  x=alpha,
  y expr={\NatsToBits*\thisrow{gapTwo}}
] {gap_data.dat};

\addlegendentry{
  \(I_\alpha(\mathscr M^{\otimes2})/2-I_\alpha(\mathscr M)\)
}

\end{semilogyaxis}
\end{tikzpicture}

\caption{The single-letter sandwiched upper gap decreases to zero as
\(\alpha\nearrow1\).}
\label{fig:gap-plot}
\end{figure*}

\section{Single-Letter Bounds}

Since $I_{\alpha}\left(\mathscr{N}\right)$ can be strictly superadditive, the largest Petz channel information in the asymptotic limit is expressed by its regularization:
\begin{align}
	I_{\alpha}^{\infty}(\mathscr{N})
	&\coloneq \sup_{n\in\mathds{N}} \frac{1}{n} I_{\alpha}\left(\mathscr{N}^{\otimes n}\right)
	= \lim_{n\to\infty} \frac{1}{n} I_{\alpha}\left(\mathscr{N}^{\otimes n}\right).
\end{align}
While this regularization captures the optimal communication quality, it unfortunately demands an intractable, infinite-dimensional optimization. To determine the ultimate limits of this multi-copy advantage, we bypass this incomputability by deriving a computable single-letter upper bound. See Figure~\ref{fig:gap-plot} for the numerical results.

\begin{proposition}[Single-letter bounds]
	For any quantum channel $\mathscr{N}_{\A\to\B}$ and $\alpha \in (0,1)$,
	\begin{align}
		I_{\alpha}(\mathscr{N})
		\leq 
		I_{\alpha}^{\infty}(\mathscr{N})
		\leq
		\widetilde{I}_{\frac{1}{2-\alpha}}(\mathscr{N})
		\leq 
		{I}_{\frac{1}{2-\alpha}}(\mathscr{N}),
	\end{align}
	where the upper bounds both converge to $I_{1}(\mathscr{N})$ as $\alpha \to 1$.
\end{proposition}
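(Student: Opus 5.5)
The chain has four pieces, and I will handle them in order.

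\emph{Lower bound.} The inequality $I_{\alpha}(\mathscr N)\le I_{\alpha}^{\infty}(\mathscr N)$ is immediate: take $n=1$ in the supremum. The supremum equals the limit by Fekete's lemma. For this I only need superadditivity $I_\alpha(\mathscr N^{\otimes(n+m)})\ge I_\alpha(\mathscr N^{\otimes n})+I_\alpha(\mathscr N^{\otimes m})$, which follows by using product inputs $\rho_n\otimes\rho_m$ in \eqref{eq:objective}. With a product input, $\omega$ factorizes, so the trace functional $Q_\alpha$ factorizes as well.

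\emph{Middle inequality.} This is the key step. I want to show
\[
\tfrac1n I_\alpha(\mathscr N^{\otimes n})\le \widetilde I_{\beta}(\mathscr N),\qquad \beta:=\tfrac{1}{2-\alpha}\ge\tfrac12 .
\]
Given this, additivity of $\widetilde I_\beta$ for $\beta\in[\sfrac12,1)$ (Li--Xu \cite{li2026completely}) reduces everything to the one-shot claim $I_\alpha(\mathscr N)\le \widetilde I_{1/(2-\alpha)}(\mathscr N)$ for every channel. I then apply this claim to $\mathscr N^{\otimes n}$ and divide by $n$.

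For the one-shot claim I would compare the two divergences at a fixed input $\rho$. The first tool is the standard duality relation, valid for a pure state $\psi_{\R\A\E}$ whose marginal on $\R\B$ is $\omega_{\R\B}$:
\[
\min_{\sigma_\B} D_\alpha(\omega_{\R\B}\Vert\rho_\R\otimes\sigma_\B)\;\leftrightarrow\;-\min_{\tau_{\E}}\widetilde D_{1/\alpha}(\cdot\Vert\cdot).
\]
This is the Tomamichel--Berta--Hayashi Petz/sandwiched duality, with $\alpha\leftrightarrow 1/\alpha$ (sandwiched orders $>1$). That alone does not yield order $1/(2-\alpha)$, so I would try a more direct route instead.

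The direct route uses the closed form \eqref{eq:objective}, $Q_\alpha(\rho)=\Tr[(\Tr_\R[\rho_\R^{1-\alpha}\omega^\alpha])^{1/\alpha}]$. I would compare it with the known closed form for $\widetilde I_\beta$ of a pure-input state; for $\beta<1$ this is a Sibson-type expression involving $\rho_\R^{(1-\beta)/(2\beta)}$ sandwiching. The choice $\beta=1/(2-\alpha)$ gives $(1-\beta)/\beta=1-\alpha$, which is exactly the weight appearing on $\rho_\R$. That matching suggests the comparison is the Araki--Lieb--Thirring inequality. For $\alpha\in(0,1)$ it gives
\[
\Tr[(\rho^{\frac{1-\alpha}{2}}\omega\rho^{\frac{1-\alpha}{2}})^{\alpha}]\ \text{vs}\ \Tr[\rho^{\frac{(1-\alpha)\alpha}{2}}\omega^\alpha\rho^{\frac{(1-\alpha)\alpha}{2}}],
\]
and a suitable use of it after passing to the minimizing $\sigma_\B$ should give $\widetilde D_\beta\ge D_\alpha$-type control with the right sign of the prefactor $\frac{\alpha}{\alpha-1}<0$.

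I expect this exponent bookkeeping to be the main obstacle. If it fails, my first fallback is different: apply the duality to $\mathscr N^{\otimes n}$ to reach a sandwiched quantity of order $>1$ on the complementary channel, and then use the Li--Xu duality/additivity theorem again to convert back to $\widetilde I_{1/(2-\alpha)}$ of the original channel.

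\emph{Last inequality.} The bound $\widetilde I_{\beta}\le I_\beta$ follows from $\widetilde D_\beta\le D_\beta$ (Araki--Lieb--Thirring), applied pointwise for each $\rho$ and $\sigma$ and preserved by the max-min.

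\emph{Convergence.} As $\alpha\to1$ we have $\beta=1/(2-\alpha)\to1$. Continuity of both Rényi informations at order $1$ \cite{CMWO14}, together with monotonicity in the order, gives convergence of both upper bounds to $I_1(\mathscr N)$. Since all four quantities are squeezed between $I_\alpha(\mathscr N)\to I_1$ and the upper bounds, $I^\infty_\alpha$ also converges to $I_1$.
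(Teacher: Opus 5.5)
\textbf{Gap in the middle inequality: the one-shot bound is never proved.}

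Your overall structure matches the paper's. The first inequality uses product inputs and Fekete's lemma. The second uses a one-shot bound applied to $\mathscr N^{\otimes n}$, together with the Li--Xu additivity of $\widetilde I_{\beta}$ for $\beta=\frac{1}{2-\alpha}\in(\tfrac12,1)$. The third uses $\widetilde D_\beta\le D_\beta$.

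The problem is that the one step carrying all the content, $I_\alpha(\mathscr N)\le\widetilde I_{\beta}(\mathscr N)$, is never established, and the Araki--Lieb--Thirring route you sketch cannot deliver it.

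\emph{What is actually needed.} Compare pointwise for fixed $\rho_\A$ and $\sigma_\B$. No closed form for the $\sigma_\B$-minimization in $\widetilde I_\beta$ exists for general channels, and none is needed. With $\omega=\omega_{\R\B}$ and $\sigma=\rho_\R\otimes\sigma_\B$, you need the \emph{upper} bound $\Tr[(\sigma^{\frac{1-\alpha}{2}}\omega\,\sigma^{\frac{1-\alpha}{2}})^{\beta}]\le(\Tr[\omega^{\alpha}\sigma^{1-\alpha}])^{\beta}$.

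\emph{Why ALT fails.}
\begin{itemize}
\item ALT with exponent in $(0,1)$ bounds a sandwiched trace from \emph{below} by a Petz-type trace. That is exactly why it yields $\widetilde D\le D$, which you correctly use in your last step. Here the direction is reversed.
\item Your displayed comparison also carries the wrong order ($\alpha$ rather than $\beta$) on the sandwiched side.
\item More fundamentally, the target is not homogeneous in $\omega$. Under $\omega\mapsto c\,\omega$ the left side scales as $c^{\beta}$ and the right side as $c^{\alpha\beta}$. So any proof must use $\Tr[\omega]=1$, and a chain of ALT inequalities (equal degree on both sides) cannot see this.
\end{itemize}
Your duality fallback is too unspecified to check.

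\textbf{The missing ingredient and how to supply it.} What you need is the divergence inequality $D_\alpha(\rho\Vert\sigma)\le\widetilde D_{\frac{1}{2-\alpha}}(\rho\Vert\sigma)$. The paper imports it from Wilde, Jen\v{c}ov\'a, and ATB24, and it follows from H\"older's inequality rather than ALT. Write $\rho^{1/2}=\rho^{\alpha/2}\rho^{(1-\alpha)/2}$ and use $\frac{1}{2\beta}=\frac12+\frac{1-\alpha}{2}$:
\[
\Tr\!\Big[\big(\sigma^{\frac{1-\alpha}{2}}\rho\,\sigma^{\frac{1-\alpha}{2}}\big)^{\beta}\Big]
=\big\|\sigma^{\frac{1-\alpha}{2}}\rho^{\frac{\alpha}{2}}\cdot\rho^{\frac{1-\alpha}{2}}\big\|_{2\beta}^{2\beta}
\le\big\|\sigma^{\frac{1-\alpha}{2}}\rho^{\frac{\alpha}{2}}\big\|_{2}^{2\beta}\,\big\|\rho^{\frac{1-\alpha}{2}}\big\|_{\frac{2}{1-\alpha}}^{2\beta}
=\big(\Tr[\rho^{\alpha}\sigma^{1-\alpha}]\big)^{\beta}.
\]
The last step uses $\Tr[\rho]=1$. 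Since $\frac{1-\beta}{2\beta}=\frac{1-\alpha}{2}$, $\beta-1<0$ and $\frac{\beta}{\beta-1}=\frac{1}{\alpha-1}$, applying $\frac{1}{\beta-1}\log$ gives $\widetilde D_\beta(\rho\Vert\sigma)\ge D_\alpha(\rho\Vert\sigma)$.

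Now apply this with $\rho\leftarrow\omega_{\R^n\B^n}$ and $\sigma\leftarrow\rho_{\R^n}\otimes\sigma_{\B^n}$. A pointwise inequality survives the $\max_{\rho}\min_{\sigma}$, so
\[
\tfrac1n I_\alpha(\mathscr N^{\otimes n})\le\tfrac1n\widetilde I_\beta(\mathscr N^{\otimes n})=\widetilde I_\beta(\mathscr N).
\]
From there your argument coincides with the paper's.
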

\noindent We defer the detailed proof to Appendix~\ref{sec:single-letter}.

The single-letter upper bound in terms of $\widetilde{I}_{\frac{1}{2-\alpha}}(\mathscr{N})$ is a double-state optimization.  One may further relax it to the Petz version.
We show in Appendix~\ref{sec:single-letter} that the sandwiched \Renyi information still admits a one-state optimization expression for quantum-classical channels.

\section{Discussion}

In this paper, we analytically prove the strict superadditivity of the Petz--\Renyi information $I_{\alpha}(\mathscr{N})$ of order $\alpha \in [\sfrac12,1)$ for a family of quantum-classical channels and the amplitude damping channels.
This implies that joint preshared entanglement can increase the random coding exponent for transmission rates below capacity.
Our proof extends to $\alpha \in(0,1)\cup(1,2)$ as well, and the strict superadditivity vanishes at $\alpha = 1$ \cite{BSS+02} and $\alpha = 2$. See Table~\ref{tab:renyi_additivity} for the summary.

One may wonder if a stronger resource for assisting communication would ease the strict superadditivity of $I_{\alpha}(\mathscr{N})$.
In fact, Ref.~\cite{ATB24} considers non-signaling assistance with one-bit forward activation, and the resulting error exponent is given by the same $I_{\alpha}(\mathscr{N})$ of order $\alpha \in (0,1)$.
Our results then imply that joint assisting resource across multiple channel uses can still enhance the error exponent.
While Girardi \textit{et al.}~demonstrated that the zero-rate error exponent admits a regularized expression \cite{Qumlaut}, our investigation targets the constant-rate regime near capacity—the operational domain most crucial for evaluating practical, high-throughput  communication.

Finally, numerical evidence from our single-heavy Fourier measurement and amplitude-damping channel also demonstrate strict superadditivity for the Petz--\Renyi channel entropy, an additivity question posed by Gour and Wilde \cite{GW21}. Our examples then show that product strategies for channel discrimination in \cite{CMWO14} are suboptimal in general.


 \begin{table}[htbp]
	     \centering
	     \caption{Additivity properties of the Petz--\Renyi information $I_\alpha(\mathscr{N})$ and the sandwiched \Renyi information $\widetilde{I}_\alpha(\mathscr{N})$ across different ranges of $\alpha$.
		     We show that $I_\alpha(\mathscr{N})$ can be strictly superadditive for $\alpha \in (0,1)\cup(1,2)$. 
             Ranges of $\alpha$ marked with '$\times$' are excluded, as they fail to satisfy the data-processing inequality.
		     }
	     \label{tab:renyi_additivity}
	
	     \renewcommand{\arraystretch}{1.8}

	     \arrayrulecolor{black}
	     \begin{adjustbox}{max width=\columnwidth}
		     \begin{tabular}{l !{\color{lightgray}\vrule} c !{\color{lightgray}\vrule} c !{\color{lightgray}\vrule} c !{\color{lightgray}\vrule} c !{\color{lightgray}\vrule} c !{\color{lightgray}\vrule} c}
			         \toprule
			         $\alpha$ & $(0, \frac12)$ & $[\frac12, 1)$ & $1$ & $(1, 2)$ & $2$ & $(2, +\infty]$ \\
			         \midrule
			
			         $I_\alpha$ &
			         \multicolumn{2}{c!{\color{lightgray}\vrule}}{{\renewcommand{\arraystretch}{1.0}\begin{tabular}{@{}c@{}}Non-additive \\ (Theorem~\ref{thm:strict-superadditivity})\end{tabular}}} &
			         \multirow{2.5}{*}{{\renewcommand{\arraystretch}{1.0}\begin{tabular}{@{}c@{}}Additive \\ \cite{BSS+02}\end{tabular}}} &
			         {\renewcommand{\arraystretch}{1.0}\begin{tabular}{@{}c@{}}Non-additive \\ (App.~\ref{sec:q-c})\end{tabular}} &
			         {\renewcommand{\arraystretch}{1.0}\begin{tabular}{@{}c@{}}Additive \\ (App.~\ref{sec:properties})\end{tabular}} &
			         $\times$ \arrayrulecolor{gray}\\
			
			         \cmidrule{2-3} \cmidrule{5-7}
			
			         \arrayrulecolor{black}$\widetilde{I}_\alpha$ &
			         $\times$ &
			         {\renewcommand{\arraystretch}{1.0}\begin{tabular}{@{}c@{}}Additive \\ \cite{li2026completely}\end{tabular}} &
			         &
			         \multicolumn{3}{c}{{\renewcommand{\arraystretch}{1.0}\begin{tabular}{@{}c@{}}Additive \\ \cite{GW14}\end{tabular}}} \\
			
			         \bottomrule
			     \end{tabular}
		     \end{adjustbox}
	 \end{table}


\bibliography{reference, ref, ref-additivity}
\onecolumngrid

\clearpage 
\appendix
\setcounter{secnumdepth}{2}

\section{Definitions and Notation} \label{sec:definition}

We consider finite-dimensional Hilbert space.
We denote by $\rho_{\A}$ and $\sigma_{\B}$ quantum states (i.e.~density matrix) on quantum systems $\A$ and $\B$, respectively.
The optimizations $\max_{\rho_{\A}}$ or $\min_{\sigma_{\B}}$ are over the state space on systems $\A$ or $\B$.
We drop the subscript $\A$ and $\B$ if the name of the quantum systems are irrelevant.
We use $\R$ to stand for the reference system of $\A$; hence, $\R\cong \A$.
For $p\geq 1$, we define the Schatten norm $\|X\|_p \equiv (\Tr[|X|^p])^{1/p}$.

A quantum channel $\mathscr{N}_{\A\to\B}$ is a completely positive and trace-preserving map from system $\A$ to system $\B$.
We denote its unnormalized Choi operator by
\begin{align}
    \Gamma_{\A\B}^{\mathscr{N}}
    &\equiv \mathrm{id} \otimes \mathscr{N}_{\bar{\A}\to\B}(|\tilde{\Phi}\rangle\langle\tilde{\Phi}|_{\A\bar{\A}}),
    \\
    |\tilde{\Phi}\rangle_{\A\bar{\A}} &\equiv \sum_i |i\rangle_{\A} |i\rangle_{\bar{\A}}, \quad \bar{\A} \cong \A.
    \label{eq:unnormalized_MES}
\end{align} 
The joint input-output state given an input $\rho_{\A}$ is denoted by
\begin{align}
	\omega_{\R\B}^{\rho} 
	\equiv \sqrt{\rho_{\R}} \Gamma_{\R\B}^{\mathscr{N}} \sqrt{\rho_{\R}}.
\end{align}
The superscript $\rho$ will be dropped if the context is clear.
We use 
\begin{align}
\mathscr{M}_{\A\to\Y}(\rho_{\A})
= \sum_{y\in\Y} \Tr\left[\rho_{\A} M_{\A}^y\right] |y\rangle\langle y|_{\Y}
\end{align}
for a quantum-classical (measurement) channel, described by the associated positive operator-valued measure (POVM) $\{M_{\A}^y\}_{y\in\Y}$.

For quantum states $\rho$ and $\sigma$ and $\alpha \in (0,1)\cup(1,+\infty)$, we define the Petz \cite{Pet86} and sandwiched \cite{MDS+13, WWY14} \Renyi relative entropies, respectively, as
\begin{align}
	D_{\alpha}(\rho\Vert\sigma)
	&\equiv \frac{1}{\alpha-1} \log \Tr\left[ \rho^{\alpha} \sigma^{1-\alpha} \right],
	\\
	\widetilde{D}_{\alpha}(\rho\Vert\sigma)
	&\equiv
	\frac{1}{\alpha-1} \log \Tr\left[ \left( \sigma^{\frac{1-\alpha}{2\alpha}} \rho \sigma^{\frac{1-\alpha}{2\alpha}} \right)^{\alpha} \right].
\end{align}
For $\alpha < 1$, both quantities are defined to be infinite for orthogonal states.
For $\alpha > 1$, both quantities are defined for $\text{supp} (\rho) \subseteq \text{supp}(\sigma)$, and infinite otherwise.
The end points $\alpha \in \{0,1,+\infty\}$ are defined by continuous extension.
It is known that $\widetilde{D}_{\alpha}(\rho\Vert\sigma) \leq D_{\alpha}(\rho\Vert\sigma)$ \cite{Ara90, MDS+13, MO14};
both quantities are non-decreasing in $\alpha$ and 
\begin{align}
    \lim_{\alpha\to 1} \widetilde{D}_{\alpha}(\rho\Vert\sigma) 
    = \lim_{\alpha\to 1} {D}_{\alpha}(\rho\Vert\sigma) 
    = D_1(\rho\Vert\sigma) \equiv \Tr\left[\rho\left(\log \rho - \log \sigma\right)\right].
\end{align}
The Petz--\Renyi relative entropy is contractive under any quantum channel for $\alpha \in [0,2]$ \cite{Pet86}, while sandwiched \Renyi relative entropy satisfies this property for $\alpha \in [\sfrac12,+\infty]$ \cite{MDS+13, WWY14, MO14}.

Define the Petz--\Renyi information and sandwiched \Renyi information of $\mathscr{N}$ as
\begin{align} 
	{I}_{\alpha}(\mathscr{N})
	&\equiv \sup_{\rho_{\A}} \min_{\sigma_{\B}} {D}_{\alpha} \left( \mathscr{N}(\psi_{\R\A}^{\rho}) \Vert \rho_{\R} \otimes \sigma_{\B} \right)
	\label{eq:Pez-Renyi_defn}
	\\
	&= \sup_{\rho_{\A}}  \frac{\alpha}{\alpha-1} \log \Tr\left[ \left( \Tr_{\A}[\rho_{\A}^{1-\alpha} (\sqrt{\rho_{\A}} \Gamma_{\A\B}^{\mathscr{N}} \sqrt{\rho_{\A}})^{\alpha} ] \right)^{1/\alpha} \right],\label{eq:Petz-Renyi_app}
	\\
	\widetilde{I}_{\alpha}(\mathscr{N})
	&\equiv \sup_{\rho_{\A}} \min_{\sigma_{\B}} \widetilde{D}_{\alpha} \left( \mathscr{N}(\psi_{\R\A}^{\rho}) \Vert \rho_{\R} \otimes \sigma_{\B} \right),
	\label{eq:defn_sandwiched-app}
\end{align}
where $\psi_{\R\A}^{\rho}$ is a purification of the input state $\rho_{\A}$.
Equality \eqref{eq:Petz-Renyi_app} follows from \cite{HT14}.
Note that for $\alpha \in (0,1)$, both quantities are finite; we may change $\sup_{\rho_{\A}}$ to $\max_{\rho_{\A}}$.
By the relations between the Petz and sandwiched \Renyi relative entropies, we have \cite{CMWO14}
\begin{align}
	\widetilde{I}_{\alpha}(\mathscr{N})
	\leq I_{\alpha}(\mathscr{N}), \quad \forall\, \alpha \geq 0;
    \quad
    \lim_{\alpha\to1} \widetilde{I}_{\alpha}(\mathscr{N})
    =  \lim_{\alpha\to1} {I}_{\alpha}(\mathscr{N})
    = {I}_{1}(\mathscr{N}).
\end{align}

The \emph{additivity} notions of an extended-real-valued function $f$ on the set of quantum channels are the following:
\begin{align}
    &\text{(strongly additive)} &&f(\mathscr{N}_1\otimes\mathscr{N}_2)
    = f(\mathscr{N}_1) + f(\mathscr{N}_2);
    \\
    &\text{(weakly additive)} &&f(\mathscr{N}\otimes\mathscr{N})
    = 2 f(\mathscr{N});
    \\
    &\text{(superadditive)}
    &&f(\mathscr{N}_1\otimes\mathscr{N}_2)
    \geq f(\mathscr{N}_1) + f(\mathscr{N}_2);
    \\
    &\text{(subadditive)}
    &&f(\mathscr{N}_1\otimes\mathscr{N}_2)
    \leq f(\mathscr{N}_1) + f(\mathscr{N}_2).
\end{align}

\section{Properties of The Petz--Rényi Information} \label{sec:properties}
In this section, we derive basic properties of the Petz--\Renyi information $I_{\alpha}(\mathscr{N})$.
We first show that $I_{\alpha}(\mathscr{N})$ is equivalent to a convex optimization (Proposition~\ref{prop:Petz_convexity}).
Second, $I_{\alpha}(\mathscr{N})$ is strongly additive for some special channels (Proposition~\ref{prop:additivity_special-app}).
Third, $I_{\alpha}(\mathscr{N})$ is strongly additive for any channel at $\alpha = 2$ (Proposition~\ref{prop:alpha-2}).

Though the objective function on the right-most side of \eqref{eq:Petz-Renyi_app} is not concave in $\rho_{\A}$ for $\alpha \in (0,1)$,
we can consider the minimization of the map
$\rho_{\A} \mapsto \Tr\left[ \left( \Tr_{\A} \left[ \rho_{\A}^{1-\alpha} \left( \sqrt{\rho_{\A}} \Gamma_{\A\B}^{\mathscr{N}} \sqrt{\rho_{\A}} \right)^{\alpha}  \right] \right)^{1/\alpha} \right]$ for $\alpha \in (0,1)$, since the logarithm is monotone.
The following Proposition~\ref{prop:Petz_convexity} shows the convexity, which in turn, implies that the objective function \eqref{eq:Petz-Renyi_app} is quasi-concave in $\rho_{\A}$ for any $\alpha \in (0,1)$.

\begin{app-proposition}[Convex optimization reduction] \label{prop:Petz_convexity}
	Let $\rho_{\A}$ be a state and let $\Gamma_{\A\B}^{\mathscr{N}}$ be the Choi operator of a channel $\mathscr{N}_{\A\to \B}$.
	Then,
	the map
	\begin{align} \label{eq:app-objective}
		\rho_{\A} \mapsto \Tr\left[ \left( \Tr_{\A} \left[ \rho_{\A}^{1-\alpha} \left( \sqrt{\rho_{\A}} \Gamma_{\A\B}^{\mathscr{N}} \sqrt{\rho_{\A}} \right)^{\alpha}  \right] \right)^{1/\alpha} \right]
	\end{align}
	on density operators is convex for any $\alpha \in (0,1)$ and concave for any $\alpha \in (1,2\,]$.
\end{app-proposition}

\begin{proof}
    Write $\Gamma_{\A\B}=\Gamma_{\A\B}^{\mathscr{N}}$.
    Without loss of generality, we only prove the case of $\rho_{\A} > 0$ and $\Gamma_{\A\B}>0$.
	The case of $\rho_{\A} \geq 0$ and $\Gamma_{\A\B} \geq 0$ follows from substituting $\rho_{\A} \leftarrow (1-\epsilon) \rho_{\A} + \epsilon \I_{\A}/d_{\A}$, $\Gamma_{\A\B}\leftarrow (1-\epsilon)\Gamma_{\A\B} + \epsilon \I_{\A}\otimes\I_{\B}/d_{\B}$, continuity, and letting $\epsilon \searrow 0$.
    
    First, consider $\alpha \in (0,1)$.
	It is sufficient to show the convexity of the map
	\begin{align}
		\rho_{\A} \mapsto \Tr_{\A} \left[ \rho_{\A}^{1-\alpha} \left( \sqrt{\rho_{\A}} \Gamma_{\A\B} \sqrt{\rho_{\A}} \right)^{\alpha}  \right] 
	\end{align}
	on positive semi-definite operators, since $\Tr[ \left(\cdot\right)^{1/\alpha} ]$ is convex and non-decreasing for $\alpha \in (0,1)$.
		
	Via polar decomposition, we have $ L f(L^\dagger L) = f( L L^\dagger) L $ for any unitary-invariant functional calculus $f$.
	Then, 
	\begin{align}
		\left( \sqrt{\rho_{\A}} \Gamma_{\A\B} \sqrt{\rho_{\A}} \right)^{\alpha}
		= \sqrt{\rho_{\A}} \sqrt{\Gamma_{\A\B}} \big( \sqrt{\Gamma_{\A\B}} \rho_{\A} \sqrt{\Gamma_{\A\B}} \big)^{\alpha-1} \sqrt{\Gamma_{\A\B}} \sqrt{\rho_{\A}},
	\end{align}
	which means that it is equivalent to consider the map
	\begin{align}
		\rho_{\A} 
		\mapsto &\Tr_{\A}\left[ \sqrt{\Gamma_{\A\B}} \big( \sqrt{\Gamma_{\A\B}} \rho_{\A} \sqrt{\Gamma_{\A\B}} \big)^{\alpha-1} \sqrt{\Gamma_{\A\B}} \rho_{\A}^{2-\alpha}  \right]
		\\
		&= \langle \tilde{\Phi} |_{\A \bar{\A}} \sqrt{\Gamma_{\A\B}} \big( \sqrt{\Gamma_{\A\B}} \rho_{\A} \sqrt{\Gamma_{\A\B}} \big)^{\alpha-1} \sqrt{\Gamma_{\A\B}} \otimes (\rho_{\bar{\A}}^{\top})^{2-\alpha} 
		|\tilde{\Phi} \rangle_{\A \bar{\A}}, \label{eq:concavity_end}
	\end{align}
	where $|\tilde{\Phi}\rangle_{\A \bar{\A}}$ was defined in \eqref{eq:unnormalized_MES}.
	
	Now, invoke Lemma~\ref{lemm:Lieb-Ando} below with $X \leftarrow \sqrt{\Gamma_{\A\B}} \rho_{\A} \sqrt{\Gamma_{\A\B}}$, $Y \leftarrow \rho_{\bar{\A}}^{\top}$, and $p \leftarrow \alpha - 1 \in (-1,0)$.
	Then, the map given in \eqref{eq:concavity_end} is convex by noting that the map $\rho_{\A} \mapsto \sqrt{\Gamma_{\A\B}} \rho_{\A} \sqrt{\Gamma_{\A\B}}$ is linear, transpose $(\cdot)^{\top}$ is linear, and $\langle \Phi |_{\A \bar{\A}}  \sqrt{\Gamma_{\A\B}} (\cdot) \sqrt{\Gamma_{\A\B}} \otimes \I_{\bar{\A}} |\Phi \rangle_{\A \bar{\A}}$ is a positive map.
 
    The proof of the case $\alpha \in (1,2)$ follows similarly by noting that $\Tr[ \left(\cdot\right)^{1/\alpha} ]$ is concave and non-decreasing for $\alpha \in (1,2)$.
    For $\alpha = 2$, we take the pointwise limit $\alpha\nearrow 2$ of the concavity for $\alpha \in (1,2)$.
    
    \begin{app-lemma}[Lieb's Concavity Theorem {\cite{Lie73}} \& Ando's Convexity Theorem {\cite{And79}}] \label{lemm:Lieb-Ando}
	Let $X,Y > 0$ be positive operators.
	The map $(X,Y)\mapsto X^{p} \otimes Y^{1-p}$ on positive definite operators is jointly convex for $p \in (-1,0) \cup (1,2)$ and 
	is jointly concave for $p \in (0,1)$.
	%
\end{app-lemma}
\end{proof}

\begin{app-proposition}[Additivity for special channels] \label{prop:additivity_special-app}
The following expressions and strong additivity hold for $I_{\alpha}(\mathscr{N})$.
\begin{enumerate}
    \item\label{app-item:isometric} Isometric channels: 
    \begin{align}
    I_{\alpha}(\mathscr{N}) = \sup_{\rho_{\A}} 2 H_{\frac{2-\alpha}{\alpha}}(\A)_{\rho} =
    \begin{dcases}
        2 \log d_{\A} & \alpha \in (0,2],
        \\
        +\infty & \alpha > 2,
    \end{dcases}
    \end{align}
    where $H_{\alpha}(\A)_{\rho} \equiv \frac{1}{1-\alpha}\log \Tr[\rho^{\alpha}]$ is the \Renyi entropy and
    $d_{\A}$ denotes the dimension of the input Hilbert space.
    
    \item\label{app-item:projective} Projective measurement channels with $|\Y|$ nonzero projectors: $I_{\alpha}(\mathscr{N}) = \log |\Y|$ for $\alpha \in (0,2]$.
    
    \item\label{app-item:covariant} Covariant quantum channels \footnote{We impose the standard condition on the group covariant channels whose representation acting
on the input space is irreducible.}:
    $I_{\alpha}(\mathscr{N}) = \frac{1}{1-\alpha}\log d_{\A} + \frac{\alpha}{\alpha-1} \log \Tr\left[ \left( \Tr_{\A}\left[ (\Gamma_{\A\B}^{\mathscr{N}})^{\alpha} \right] \right)^{1/\alpha} \right]$ for $\alpha \in (0,1)\cup(1,2]$.

    \item\label{app-item:commuting-input} Channels with commuting inputs:
    Suppose every admissible density operator $\rho_{\A}$ in the input algebra commutes with $\Gamma_{\A\B}^{\mathscr{N}}$, i.e.,
    \begin{align} \label{eq:input_commuting}
        \left[ \rho_{\A} \otimes \I_{\B}, \Gamma_{\A\B}^{\mathscr{N}} \right] = 0_{\A\B},
    \end{align}
    we have 
    \begin{align} \label{eq:commuting-input-expression}
    I_{\alpha}(\mathscr{N}) = \sup_{ \rho_{\A} }
	\frac{\alpha}{\alpha-1} \log \Tr\left[ \left( \Tr_{\A} \left[ \rho_{\A} \cdot \Gamma_{\A\B}^{\alpha}  \right] \right)^{1/\alpha} \right], \quad \alpha \in (0,1).
    \end{align}
\end{enumerate}
\end{app-proposition}
\begin{app-remark}
    A trivial class of channels satisfying \eqref{eq:input_commuting} is the replacer channel $\mathscr{N}_{\A\to\B}(\rho_{\A}) = \sigma_{\B}$ for all $\rho_{\A}$ on $\A$, which has $I_{\alpha}(\mathscr{N}) = 0$.
    Another class of channels satisfying \eqref{eq:input_commuting} is classical-quantum channels $\mathscr{N}_{\X\to\B}$, whose input states $\rho_{\X}$ are restricted to diagonal matrices.
    The corresponding Petz--\Renyi information (for $\alpha \in (0,2]$) is 
    \begin{align} \label{eq:Petz-Renyi_c-q}
        I_{\alpha}(\mathscr{N}_{\X\to\B}) = \sup_{p_{\X}} \frac{\alpha}{\alpha-1} \log \Tr\left[ \left( \sum_{x\in\X} p_{\X}(x) \mathscr{N}(x)^{\alpha} \right)^{1/\alpha} \right]
        = \min_{\sigma_{\B}} \sup_{x\in\X} D_{\alpha}(\mathscr{N}(x)\Vert\sigma_{\B}),
    \end{align}
    where the last term is called \emph{\Renyi divergence radius} and was  proved by Mosonyi and Ogawa \cite[Proposition 4.2]{MO17} (see also \cite{MH11, CGH18}).
    The subadditivity of $I_{\alpha}(\mathscr{N}_{\X\to\B})$ also follows from the min-max expression and the product structure of classical-quantum channels, i.e., $\mathscr{N}_{\X\to\B}^{\otimes 2}(x_1 x_2) = \mathscr{N}_{\X\to\B}(x_1)\otimes\mathscr{N}_{\X\to\B}(x_2)$.
    See also the proof by Li and Yang \cite{LY25}.
\end{app-remark}

\begin{proof}
\noindent Item~\ref{app-item:isometric} (isometric channels):
For an isometry $V_{\A\to\B} |a\rangle_{\A} = |v_{a}\rangle_{\B}$, the joint state is
$\sqrt{\rho_{\A}} \Gamma_{\A\B}^{\mathscr{N}} \sqrt{\rho_{\A}} = |\Psi\rangle\langle\Psi|_{\A\B}$, where
$|\Psi\rangle_{\A\B} = \sum_a \sqrt{\lambda_a} |a\rangle_{\A}|v_a\rangle_{\B}$.
Its $\alpha$-power collapses to itself.
We calculate
\begin{align}
 \left( \Tr_{\A} \left[ \rho_{\A}^{1-\alpha} \left( \sqrt{\rho_{\A}} \Gamma_{\A\B}^{\mathscr{N}} \sqrt{\rho_{\A}} \right)^{\alpha}  \right] \right)^{1/\alpha}
 &= \left( \sum_{a} \lambda_a^{1-\alpha} \lambda_a \vert{}v_a\rangle\langle v_a\vert{}_{\B} \right)^{1/\alpha}
 = \sum_{a} \lambda_a^\frac{2-\alpha}{\alpha} \vert{}v_a\rangle\langle v_a\vert{}_{\B}.
\end{align}

\noindent Item~\ref{app-item:projective} (projective measurement channels):
For any quantum-classical channel $\mathscr{M}_{\A\to\Y}(\rho_{\A}) = \sum_{y\in\Y} \Tr[\rho_{\A}M_{\A}^y] |y\rangle\langle y|_{\Y}$, we calculate the joint state as
\begin{align}
    \omega_{\A\Y}^{\rho}=
	\sum_{y\in\Y} \sqrt{\rho_{\A}} (M_{\A}^y)^{\top} \sqrt{\rho_{\A}}  \otimes \vert{}y\rangle\langle y\vert{}_{\Y}.
\end{align}
Hence, 
\begin{align} 
	I_{\alpha}(\mathscr{M}_{\A\to\Y})
	&= \max_{\rho_{\A}} \frac{\alpha}{\alpha-1} \log \sum_{y\in\Y} \left(\Tr\left[ \rho_{\A}^{1-\alpha} \left(\sqrt{\rho_{\A}} (M_{\A}^y)^{\top}  \sqrt{\rho_{\A}}\right)^{\alpha} \right]\right)^{1/\alpha}
	\\
	&= \max_{\rho_{\A}} \frac{\alpha}{\alpha-1} \log \sum_{y\in\Y} \left(\Tr\left[ \rho_{\A}^{1-\alpha} \left(\sqrt{\rho_{\A}} M_{\A}^y\sqrt{\rho_{\A}}\right)^{\alpha} \right]\right)^{1/\alpha}.
	\label{eq:q-c_formula}
\end{align}
Here, we drop `$\top$' after optimization because $\rho_{\A}\mapsto\rho_{\A}^{\top}$ is a bijection of the state space.

To derive the upper bound on \eqref{eq:q-c_formula} for projective measurements, we invoke the Araki-type trace inequality of Liu--Cheng \cite{LC_Araki}:
\begin{align} \label{eq:Liu-Cheng}
	\Tr\left[f(A) A^s B^s \right]
	\leq \Tr\left[ f(A) \left(A^{\sfrac12} B A^{\sfrac12} \right)^s \right] & s\in(0,1]
\end{align}
with monotone function $f(x) = x^{1-\alpha}$, $s = \alpha \in (0,1)$, $A\leftarrow \rho_{\A}$, and  $B\leftarrow M_{\A}^y$
to obtain
\begin{align}
	I_{\alpha}(\mathscr{M}_{\A\to\Y})
	&\leq \max_{\rho_{\A}} \frac{\alpha}{\alpha-1} \log \sum_{y\in\Y} \left( \Tr\left[ \rho_{\A} \left(M_{\A}^y\right)^{\alpha} \right] \right)^{1/\alpha}
	\\
	&\overset{\text{(a)}}{=} \max_{\rho_{\A}} \frac{\alpha}{\alpha-1} \log \sum_{y\in\Y} \left( \Tr\left[ \rho_{\A} M_{\A}^y \right] \right)^{1/\alpha}
	\\
	&= \max_{\rho_{\A}} H_{1/\alpha}(\Y)_{\sum_{y}\Tr[\rho_{\A}M_{\A}^y]}
	\\
	&\overset{\text{(b)}}{=} \log |\Y|, \quad \alpha \in (0,1).
\end{align}
where (a) follows from the projective measurement and (b) follows from the dimension bound for \Renyi entropies.
For $\alpha \in (1,2]$, we apply \eqref{eq:Liu-Cheng} again with $f(x) = x$, $s = \alpha -1 \in (0,1]$, $A \leftarrow \sqrt{\rho_{\A}} M_{\A}^y \sqrt{\rho_{\A}}$, and $B\leftarrow \rho_{\A}^{-1}$ to obtain the same upper bound on
$I_{\alpha}(\mathscr{M}_{\A\to\Y})$.

The lower bound $I_{\alpha}(\mathscr{M}_{\A\to\Y}) \geq \log |\Y|$ is achieved by choosing $\rho_{\A} = \frac{1}{|\Y|} \sum_{y\in\Y} |u_y\rangle\langle u_y|_{\A}$ satisfying $M_{\A}^y |u_y\rangle_{\A} = |u_y\rangle_{\A}$.

\noindent Item~\ref{app-item:covariant} (covariant channels):
The inner trace function in \eqref{eq:app-objective} is unitary invariant with respect to the underlying group $G$.
Recalling the convexity (resp.~concavity) for $\alpha \in (0,1)$ (resp.~$\alpha \in (1,2]$), the optimizer is attained at the depolarized completely mixed state $\rho_{A}^{\star} = \int_G U_{\A} \rho_{\A} U_{\A}^{\dagger} \, \mathrm{d} U_{\A} = \frac{1}{d_{\A}} \I_{\A}$.
Direct calculation proves the claim.

\noindent Item~\ref{app-item:commuting-input} (commuting input optimizers):
The expression \eqref{eq:commuting-input-expression} directly follows from \eqref{eq:Petz-Renyi_app} and the commutation relation \eqref{eq:input_commuting}.

The superadditivity directly follows from the definition \eqref{eq:Petz-Renyi_app} by choosing the product of optimal marginal states, i.e.~$\rho_{\A_1\A_2} = \rho_{\A_1}^\star \otimes \rho_{\A_2}^\star$.
Below we prove the subadditivity.

Expressing \eqref{eq:commuting-input-expression} in terms of the Schatten norm, we have
\begin{align}
    {I}_{\alpha}(\mathscr{N})
    &= \frac{1}{\alpha-1} \log \inf_{\rho_{\A} \in \mathcal{S}(\A)} \left\| \Tr_{\A}\left[ \rho_{\A} (\Gamma_{\A\B}^{\mathscr{N}})^{\alpha} \right] \right\|_{\frac{1}{\alpha}}
    \\
    &\overset{\text{(a)}}{=} \frac{1}{\alpha-1} \log \inf_{\rho_{\A} \in \mathcal{S}(\A)} \sup_{Z_{\B}\geq 0, \|Z_{\B}\|_{\frac{1}{1-\alpha}} \leq 1 } \Tr\left[ Z_{\B} \Tr_{\A}\left[ \rho_{\A} (\Gamma_{\A\B}^{\mathscr{N}})^{\alpha} \right] \right]
    \\
    &\overset{\text{(b)}}{=} \frac{1}{\alpha-1} \log  \sup_{Z_{\B}\geq 0, \|Z_{\B}\|_{\frac{1}{1-\alpha}} \leq 1 } \inf_{\rho_{\A} \in \mathcal{S}(\A)}\Tr\left[ Z_{\B} \Tr_{\A}\left[ \rho_{\A} (\Gamma_{\A\B}^{\mathscr{N}})^{\alpha} \right] \right]
    \\
    &= \frac{1}{\alpha-1} \log  \sup_{Z_{\B}\geq 0, \|Z_{\B}\|_{\frac{1}{1-\alpha}} \leq 1 } \inf_{\rho_{\A} \in \mathcal{S}(\A)}\Tr\left[ \rho_{\A} \otimes  Z_{\B} (\Gamma_{\A\B}^{\mathscr{N}})^{\alpha} \right]
    \\
    &= \frac{1}{\alpha-1} \log  \sup_{Z_{\B}\geq 0, \|Z_{\B}\|_{\frac{1}{1-\alpha}} \leq 1 } \lambda_{\min} \left( \Tr_{\B}\left[ Z_{\B} (\Gamma_{\A\B}^{\mathscr{N}})^{\alpha} \right] \right),
    \label{eq:upper_additive1}
\end{align}
(a) follows from the duality of Schatten $\frac{1}{\alpha}$ and $\frac{1}{1-\alpha}$ norms;
(b) follows from Sion's minimax theorem and the bilinearity of the  objective function.
The subadditivity then follows by choosing $Z_{\B_1\B_2} = Z_{\B_1}^\star \otimes Z_{\B_2}^\star$, where $Z_{\B_i}$ is the optimizer in \eqref{eq:upper_additive1} for $\mathscr{N}_i$.
This concludes the proof.
\end{proof}

\begin{app-proposition}[Additivity at $\alpha =2$] \label{prop:alpha-2}
The Petz--\Renyi information $I_{\alpha}(\mathscr{N})$ is strongly additive for $\alpha = 2$, i.e.,
\begin{align}
    I_2(\mathscr{N}_1\otimes\mathscr{N}_2)
    = I_2(\mathscr{N}_1) + I_2(\mathscr{N}_2).
\end{align}
\end{app-proposition}

\begin{proof}
It suffices to prove the subadditivity since the superadditivity follows directly from the definition.
Define
\begin{align}
 Q_2(\mathscr{N})
 &\coloneqq
 \sup_{\rho_{\A}}
 \Tr\!\left[
 \left(
 \Tr_{\A}\!\left[
 \rho_{\A}^{-1}
 \left(
 \sqrt{\rho_{\A}}\,
 \Gamma_{\A\B}^{\mathscr{N}}
 \sqrt{\rho_{\A}}
 \right)^2
 \right]
 \right)^{1/2}
 \right].
 \label{eq:def-Q}
\end{align}
Hence, $ I_2(\mathscr{N})=2\log Q_2(\mathscr{N})$.
Let $\Pi_{\rho_{\A}}$ be the projection onto the support of $\rho_{\A}$. 
By the cyclic property of trace, we have, 
\begin{align}
 Q_2(\mathscr{N})
 &= \sup_{\rho_{\A}}
 \Tr\!\sqrt{
 \Tr_{\A}\!\left[ \Pi_{\rho_{\A}}
 \Gamma_{\A\B}^{\mathscr{N}}
 \rho_{\A}
 \Gamma_{\A\B}^{\mathscr{N}} \Pi_{\rho_{\A}}
 \right]}
 \\
 &\overset{\text{(a)}}{=}\sup_{\rho_{\A}>0, \Tr[\rho_{\A}]=1}
 \Tr\!\sqrt{
 \Tr_{\A}\!\left[
 \Gamma_{\A\B}^{\mathscr{N}}
 \rho_{\A}
 \Gamma_{\A\B}^{\mathscr{N}}
 \right]}
 \\
 &=\max_{\rho_{\A}}
 \Tr\!\sqrt{
 \Tr_{\A}\!\left[
 \Gamma_{\A\B}^{\mathscr{N}}
 \rho_{\A}
 \Gamma_{\A\B}^{\mathscr{N}}
 \right]},
 \label{eq:primal-Q}
\end{align}
where in (a) we restricted the optimization to the dense subset of full-rank $\rho_{\A}$ because the objective function here is continuous in $\rho_{\A}$.

Recall the Cauchy--Schwartz inequality, we have, for every $X_{\B}\geq0$,
\begin{align}
 \left(\Tr\sqrt{X_{\B}}\right)^2
 =\inf_{\sigma_{\B} > 0,\, \Tr[\sigma_{\B}]=1}
 \Tr[X_{\B}\sigma_{\B}^{-1}].
 \label{eq:sqrt-variational}
\end{align}
Hence,
\begin{align}
    Q_2(\mathscr{N})^2
    &= \max_{\rho_{\A}} \inf_{\sigma_{\B} > 0,\, \Tr[\sigma_{\B}]=1} \Tr\!\left[ \rho_{\A}
 \Gamma_{\A\B}^{\mathscr{N}}
 (\I_{\A}\otimes\sigma_{\B}^{-1})
 \Gamma_{\A\B}^{\mathscr{N}}
 \right]
 \\
 &= \inf_{\sigma_{\B} > 0,\, \Tr[\sigma_{\B}]=1} \max_{\rho_{\A}} \Tr\!\left[ \rho_{\A}
 \Gamma_{\A\B}^{\mathscr{N}}
 (\I_{\A}\otimes\sigma_{\B}^{-1})
 \Gamma_{\A\B}^{\mathscr{N}}
 \right]
 \\
 &= \inf_{\sigma_{\B} > 0,\, \Tr[\sigma_{\B}]=1} \left\| \Tr_{\B}\!\left[
 \Gamma_{\A\B}^{\mathscr{N}}
 (\I_{\A}\otimes\sigma_{\B}^{-1})
 \Gamma_{\A\B}^{\mathscr{N}}
 \right] \right\|_{\infty}.
\end{align}
Here, the objective function $(\rho_{\A},\sigma_{\B}) \mapsto \Tr_{\B}\!\left[ \rho_{\A}
 \Gamma_{\A\B}^{\mathscr{N}}
 (\I_{\A}\otimes\sigma_{\B}^{-1})
 \Gamma_{\A\B}^{\mathscr{N}}
 \right]$ is linear and convex via the operator convexity of the inverse function.
We apply Sion's minimax theorem by imposing $\sigma_{\B} \geq \epsilon \I_{\B}$ for the  resulting compact convex set and then letting $\epsilon \searrow 0$.

The multiplicativity of $Q_2(\mathscr{N})$ then follows by choosing product state $\sigma_{\B_1\B_2} = \sigma_{\B_1}^\star \otimes \sigma_{\B_2}^\star$ and the multiplicativity of the operator norm $\|\,\cdot\,\|_\infty$.
\end{proof}

\section{Analytic Strict Superadditivity: Fourier Measurements} \label{sec:q-c}

This section establishes strict superadditivity for a Fourier measurement (Theorem~\ref{thm:strict-superadditivity-app}).
The one-copy optimization reduces to a diagonal one-dimensional convex problem.  
The strict two-copy improvement is then proved analytically by a linear-response
calculation around the product of the one-copy optimizer toward a diagonal movement.
Notably, the witness of the strict superadditivity is only via a classically correlated state on the two-copy system.

\subsection{The single-heavy Fourier measurement}

Throughout this section, we write $d = d_{\A}$, and shorthand $M_y = M_{\A}^y$.
Fix
\begin{align}
  d\ge 3,
  \qquad
  \frac1d<\lambda<1,
  \qquad
  b=\frac{1-\lambda}{d-1}.
\end{align}
Let
\begin{align}
  a_0=\lambda,
  \qquad
  a_1=\cdots=a_{d-1}=b,
\end{align}
and let \(\omega=\mathrm{e}^{2\pi \mathrm{i}/d}\).  On \(\A\cong\mathbb{C}^d\), define
\begin{align}
  |v_y\rangle
  =
  \sum_{j=0}^{d-1}\sqrt{a_j}\,\omega^{jy}|j\rangle,
  \qquad
  y=0,\ldots,d-1.
\end{align}
The measurement \(\mathscr{M}\) consists of the \(d\) rank-one effects and the residual effect:
\begin{subequations} \label{eq:Fourier_POVM}
\begin{align} 
  M_y&=\frac{1}{d\lambda} |v_y\rangle\!\langle v_y|,
  \qquad
  y=0,\ldots,d-1,
  \\
  M_\infty
  &=
  \left(1-\frac{b}{\lambda}\right) \sum_{j=1}^{d-1}|j\rangle\!\langle j|.
\end{align}
\end{subequations}
Indeed,
\begin{align}
  \sum_{y=0}^{d-1}M_y
  =
  \sum_{j=0}^{d-1}\frac{a_j}{\lambda}|j\rangle\!\langle j|,
\end{align}
and therefore \(\sum_{y=0}^{d-1}M_y+M_\infty=\I\).

\begin{figure}
\centering
\begin{tikzpicture}[x=1cm, y=1cm, scale=1.1, transform shape]

    \definecolor{planeBg}{RGB}{247, 248, 250}       
    \definecolor{planeGrid}{RGB}{225, 228, 235}     
    \definecolor{planeBorder}{RGB}{170, 180, 195}   
    \definecolor{axisLine}{RGB}{45, 55, 70}         
    \definecolor{vecM}{RGB}{210, 50, 50}            
    \definecolor{vecProj}{RGB}{140, 150, 165}       
    \definecolor{coneFill}{RGB}{220, 80, 80}        

    \coordinate (Origin) at (0,0);
    
    \coordinate (B0) at ({2.5*cos(-45)}, {0.85*sin(-45)});   
    \coordinate (B1) at ({2.5*cos(75)}, {0.85*sin(75)});     
    \coordinate (B2) at ({2.5*cos(195)}, {0.85*sin(195)});   

    \coordinate (V0) at ({2.5*cos(-45)}, {2.6 + 0.85*sin(-45)});
    \coordinate (V1) at ({2.5*cos(75)}, {2.6 + 0.85*sin(75)});
    \coordinate (V2) at ({2.5*cos(195)}, {2.6 + 0.85*sin(195)});
    
    \coordinate (Center) at (0, 2.6);

    \tikzset{vector arrow/.style={->, >={Stealth[length=3mm, width=1.8mm]}, color=vecM, line width=1.8pt}}

    \filldraw[fill=planeBg, draw=planeBorder, line width=1pt] (0,0) ellipse (3.6cm and 1.25cm);
    
    \begin{scope}
        \clip (0,0) ellipse (3.6cm and 1.25cm);
        \draw[color=planeGrid, line width=0.6pt, step=0.4cm] (-4,-1.5) grid (4,1.5);
    \end{scope}

    \node[color=axisLine, font=\small, anchor=north west] at (-1.8, -0.6) {Residual Subspace $M_\infty$};

    \draw[color=vecProj, line width=0.8pt, densely dashed] (B0) -- (B1) -- (B2) -- cycle;
    \draw[color=vecProj, line width=0.8pt, densely dashed] (Origin) -- (B0);
    \draw[color=vecProj, line width=0.8pt, densely dashed] (Origin) -- (B1);
    \draw[color=vecProj, line width=0.8pt, densely dashed] (Origin) -- (B2);
    
    \draw[color=vecProj, line width=0.6pt, densely dotted] (V0) -- (B0);
    \draw[color=vecProj, line width=0.6pt, densely dotted] (V1) -- (B1);
    \draw[color=vecProj, line width=0.6pt, densely dotted] (V2) -- (B2);

    \node[color=vecProj, font=\sffamily\normalsize, fill=planeBg, inner sep=1pt] at (0.9, -0.3) {$\sqrt{b}$};

    \draw[vector arrow] (Origin) -- (V1) node[above right, font=\sffamily\large] {$M_1$};

    \fill[coneFill, opacity=0.10] (Origin) -- (V2) -- (V1) -- cycle;
    \fill[coneFill, opacity=0.10] (Origin) -- (V0) -- (V1) -- cycle;

    \draw[->, >={Stealth[length=3mm, width=1.8mm]}, color=axisLine, line width=1.5pt] 
        (Origin) -- (0, 3.8) node[right, font=\sffamily\large, yshift=-0.1cm] {$|0\rangle$};

    \draw[color=axisLine, line width=0.8pt] (-0.1, 2.6) -- (0.1, 2.6);
    \node[color=axisLine, font=\sffamily\normalsize, anchor=east] at (-0.15, 2.6) {$\sqrt{\lambda}$};

    \fill[coneFill, opacity=0.15] (Origin) -- (V2) -- (V0) -- cycle;

    \draw[vector arrow] (Origin) -- (V2) node[above left, font=\sffamily\large] {$M_2$};
    \draw[vector arrow] (Origin) -- (V0) node[below right, font=\sffamily\large] {$M_0$};

\end{tikzpicture}
\end{figure}

For notational simplicity, we define the one-copy functional
\begin{align}
  Q_\alpha(\rho;\mathscr{M})
  =
  \sum_y
  \left(
  \Tr\left[
  \rho^{1-\alpha}
  \left(\rho^{1/2}M_y^{\top}\rho^{1/2}\right)^\alpha
  \right]
  \right)^{1/\alpha}, \quad \alpha \in (0,1),
\end{align}
where the sum includes the residual outcome \(y=\infty\), and define
\begin{align} \label{eq:K_POVM}
    K^{(n)}(\alpha)
    \equiv \inf_{\rho_{\A^n}} Q_{\alpha}(\rho_{\A^n}; \mathscr{M}^{\otimes n}).
\end{align}
The strict superadditivity
\begin{align}
    I_{\alpha}(\mathscr{M}^{\otimes 2}) > 2 I_{\alpha}(\mathscr{M})
\end{align}
is equivalent to
\begin{align}
  K^{(2)}(\alpha)<K^{(1)}(\alpha)^2.
\end{align}

\subsection{Diagonal reduction of the one-copy optimization}

The diagonal reduction follows from covariance and convexity.  
Let \(Z = \sum_{j=0}^{d-1} \omega^j |j\rangle\langle j|\) be the generalized Pauli phase operator and 
\begin{align}
  Z^m|j\rangle=\omega^{mj}|j\rangle,
  \qquad
  m=0,\ldots,d-1.
\end{align}
The phase operator permutes the transposed Fourier effects \(M_y^{\top}\) and leaves
\(M_\infty^{\top}=M_\infty\) fixed.  Hence
\begin{align}
  Q_\alpha(Z^m\rho Z^{-m};\mathscr{M})=Q_\alpha(\rho;\mathscr{M}).
\end{align}
The phase twirl
\begin{align}
  \Delta(\rho)
  =
  \frac1d\sum_{m=0}^{d-1}Z^m \rho Z^{-m}
  =
  \sum_{j=0}^{d-1}|j\rangle\langle j|\rho|j\rangle\langle j|
\end{align}
therefore satisfies, by the convexity proven in Proposition~\ref{prop:Petz_convexity},
\begin{align}
  Q_\alpha(\Delta(\rho);\mathscr{M})
  \le
  \frac1d\sum_{m=0}^{d-1}Q_\alpha(Z^m \rho Z^{-m};\mathscr{M})
  =
  Q_\alpha(\rho;\mathscr{M}).
\end{align}
Hence, a one-copy optimizer may be chosen diagonal.

For a diagonal state \(p=(p_0,\ldots,p_{d-1})\), the objective is
\begin{equation}
\label{eq:diagonal-objective}
  Q_\alpha(p;\mathscr{M})
  =
  \frac{1}{\lambda}
  \left(\sum_{j=0}^{d-1}a_jp_j\right)^{\frac{\alpha-1}{\alpha}}
  \left(\sum_{j=0}^{d-1}a_jp_j^{2-\alpha}\right)^{\frac1\alpha}
  +
  \left(1 - \frac{b}{\lambda}\right)\left(\sum_{j=1}^{d-1}p_j\right)^{1/\alpha}.
\end{equation}
Fix \(t=p_0\).  The first factor \(\sum_j a_jp_j=\lambda t+b(1-t)\) and the residual
term depend only on \(t\).  The remaining tail dependence is through
\(\sum_{j=1}^{d-1}p_j^{2-\alpha}\).  Since \(2-\alpha>1\), the function
\(r\mapsto r^{2-\alpha}\) is strictly convex.  Hence, for fixed tail mass \(1-t\),
this sum is uniquely minimized by the uniform tail
\begin{align}
	p_1=\cdots=p_{d-1}=\frac{1-t}{d-1}.
\end{align}
Therefore
\begin{align}
  K^{(1)}(\alpha)=\min_{0\le t\le1} q_\alpha(t),
\end{align}
where
\begin{align}
  q_\alpha(t)&=r_\alpha(t)+z_\alpha(t),
  \\
  r_\alpha(t)
  &=
  \frac{1}{\lambda}\,\eta(t)^{\frac{\alpha-1}{\alpha}}\,
  \zeta_\alpha(t)^{\frac1\alpha},
  \\
  z_\alpha(t)
  &=
  \left(1-\frac{b}{\lambda}\right)(1-t)^{1/\alpha},
\end{align}
and
\begin{align}
  \eta(t)=\lambda t+b(1-t),
  \qquad
  \zeta_\alpha(t)
  =
  \lambda t^{2-\alpha}+(d-1)b\left(\frac{1-t}{d-1}\right)^{2-\alpha}.
\end{align}
This is the diagonal one-parameter reduction.

\begin{app-lemma}[Convexity of the reduced one-copy objective, POVM]
	For every \(0<\alpha<1\), the function
	\begin{align}
		t\mapsto q_{\alpha}(t)
		=
		Q_\alpha(\rho_t;\mathscr{M})
	\end{align}
	is convex on \([0,1]\).
\end{app-lemma}

\begin{proof}
	Set \(p=1/\alpha>1\) and define
	\begin{align}
		\phi(z,e)
		:=
		e\left(\frac{z}{e}\right)^p
		=
		z^p e^{1-p},
		\qquad
		z\geq0,\quad e>0.
	\end{align}
	The function \(\phi\) is the perspective of the convex function
	\(z\mapsto z^p\), and hence is jointly convex. Moreover,
	\begin{align}
		\frac{\partial\phi}{\partial z}(z,e)
		=
		pz^{p-1}e^{1-p}
		\geq0,
	\end{align}
	thereby \(\phi\) is nondecreasing in its first argument.

	Notice that \(\eta(t)>0\) is affine and
	\(\zeta_\alpha(t)\) is convex on \([0,1]\), since
	\(2-\alpha>1\). Therefore, for every \(s,t\in[0,1]\) and
	\(\theta\in[0,1]\),
	\begin{align}
		&r_\alpha\bigl(\theta t+(1-\theta)s\bigr)
		\nonumber\\
		&\quad=
		\frac{1}{\lambda}
		\phi\left(
			\zeta_\alpha\bigl(\theta t+(1-\theta)s\bigr),
			\eta\bigl(\theta t+(1-\theta)s\bigr)
		\right)
		\nonumber\\
		&\quad\leq
		\frac{1}{\lambda}
		\phi\left(
			\theta\zeta_\alpha(t)+(1-\theta)\zeta_\alpha(s),
			\theta\eta(t)+(1-\theta)\eta(s)
		\right)
		\nonumber\\
		&\quad\leq
		\theta r_\alpha(t)+(1-\theta)r_\alpha(s).
	\end{align}
	The first inequality follows from the convexity of
	\(\zeta_\alpha\), the affinity of \(\eta\), and the monotonicity
	of \(\phi\) in its first argument; the second follows from the
	joint convexity of \(\phi\). Hence \(r_\alpha\) is convex.

	Finally, since \(1/\alpha>1\),
	\begin{align}
		z_\alpha(t)
		=
		\left(1-\frac{b}{\lambda}\right)(1-t)^{1/\alpha}
	\end{align}
	is convex. Therefore
	\(q_\alpha(t)=r_\alpha(t)+z_\alpha(t)\) is convex on \([0,1]\).
\end{proof}

Next, we show that the minimizer is an interior point.  Indeed, a direct endpoint check gives
\begin{align}
  q_\alpha'(0^+)<0,
  \qquad
  q_\alpha'(1^-)>0.
\end{align}
Let \(t_\alpha\in(0,1)\) be a global minimizer.  Then
\begin{equation}
\label{eq:stationarity}
  q_\alpha'(t_\alpha)=0.
\end{equation}
Moreover, at the uniform point \(t=1/d\),
\begin{equation}
\label{eq:uniform-derivative-new}
  q_\alpha'\!\left(\frac1d\right)
  =
  \frac{d(\lambda-b)}{\alpha \lambda}\,d^{-1/\alpha}
  \left[1-(d-1)^{1/\alpha-1}\right]
  <0,
\end{equation}
where the strict inequality uses \(d\ge3\) and \(\alpha<1\).  Hence no global
minimizer can be the uniform point:
\begin{equation}
\label{eq:not-uniform}
  t_\alpha\ne \frac1d.
\end{equation}

\subsection{A correlated diagonal two-copy path}

Let
\begin{align}
  p=p(t_\alpha)
  =
  \left(t_\alpha,\frac{1-t_\alpha}{d-1},\ldots,\frac{1-t_\alpha}{d-1}\right),
\end{align}
and define the zero-sum vector
\begin{align}
  u=\left(1,-\frac1{d-1},\ldots,-\frac1{d-1}\right).
\end{align}
For real \(\kappa\), set
\begin{equation}
\label{eq:pi-kappa}
  \pi_\kappa(i,j)=p_ip_j+\kappa \cdot u_i u_j.
\end{equation}
For all sufficiently small \(|\kappa|\), \(\pi_\kappa\) is a bipartite probability distribution.
Because \(\sum_i u_i=0\), it has the same one-copy marginals as \(p\).
Denote a diagonal state on $\mathbb{C}^{d\times d}$ by
\begin{align} \label{eq:two-copy_POVM}
  \rho_{\alpha}(\kappa)
  =
  \sum_{i,j=0}^{d-1}\pi_\kappa(i,j)|ij\rangle\!\langle ij|.
\end{align}
At \(\kappa=0\), this is the product of the one-copy optimizer with itself, i.e., \(\pi_0 = p\otimes p\).
Hence,
\begin{align}
  Q_\alpha(\rho_{\alpha}(0);\mathscr{M}^{\otimes 2})=K^{(1)}(\alpha)^2.
\end{align}
Define the linear response
\begin{align}
  c_1(\alpha)
  =
  \left.
  \frac{\mathrm{d}}{\mathrm{d}\kappa}Q_\alpha(\rho_{\alpha}(\kappa);\mathscr{M}^{\otimes 2})
  \right|_{\kappa=0}.
\end{align}
If \(c_1(\alpha)<0\), then for sufficiently small positive \(\kappa\),
\begin{align}
  Q_\alpha(\rho_{\alpha}(\kappa);\mathscr{M}^{\otimes 2})
  = K^{(1)}(\alpha)^2 + c_{1}(\alpha)\kappa + \mathcal{O}(\kappa^2)
  <K^{(1)}(\alpha)^2,
\end{align}
and hence \(K^{(2)}(\alpha)<K^{(1)}(\alpha)^2\).

\subsection{Linear response}

For \(0<t<1\), define
\begin{align}
  \xi(t)=\frac{\lambda-b}{\eta(t)}
\end{align}
and
\begin{align}
  \theta_\alpha(t)
  =
  \frac{1}{\zeta_\alpha(t)} \left[ \lambda t^{1-\alpha} - b \left(\frac{1-t}{d-1}\right)^{1-\alpha} \right].
\end{align}
Then
\begin{align}
  \frac{r_\alpha'(t)}{r_\alpha(t)}
  =
  \varphi_\alpha(t),
  \qquad
  \varphi_\alpha(t)
  =
  \frac{\alpha-1}{\alpha}\xi(t)
  +
  \frac{2-\alpha}{\alpha}\theta_\alpha(t),
\end{align}
and
\begin{align}
  z_\alpha'(t)=-\frac{z_\alpha(t)}{\alpha(1-t)}.
\end{align}
The stationarity condition \eqref{eq:stationarity} is therefore
\begin{equation}
\label{eq:rz-stationarity}
  r_\alpha(t_\alpha)\varphi_\alpha(t_\alpha)
  =
  \frac{z_\alpha(t_\alpha)}{\alpha(1-t_\alpha)}.
\end{equation}

\begin{app-lemma}[Linear response at the one-copy minimizer]
\label{lem:linear-response}
At \(t=t_\alpha\),
\begin{equation}
\label{eq:c1-formula}
  c_1(\alpha)
  =
  \frac{(\alpha-1)(2-\alpha)}{\alpha}
  r_\alpha(t_\alpha)^2
  \left[\xi(t_\alpha)-\theta_\alpha(t_\alpha)\right]^2.
\end{equation}
\end{app-lemma}

\begin{proof}
Differentiate the four two-copy outcome types along the path
\eqref{eq:pi-kappa}: Fourier--Fourier, Fourier--residual, residual--Fourier, and
residual--residual.  At \(\kappa=0\), all four contributions factor into the one-copy
terms \(r_\alpha\) and \(z_\alpha\).  A direct differentiation gives
\begin{align*}
  c_1(\alpha)
  &=
  r_\alpha^2
  \left[
  \frac{\alpha-1}{\alpha}\xi^2
  +
  \frac{2-\alpha}{\alpha}\theta_\alpha^2
  \right]
  -
  \frac{2\varphi_\alpha}{1-t}\,r_\alpha z_\alpha
  +
  \frac{z_\alpha^2}{\alpha(1-t)^2},
\end{align*}
where all quantities on the right are evaluated at \(t=t_\alpha\).  Using
\eqref{eq:rz-stationarity}, the last two terms become
\(-\alpha r_\alpha^2\varphi_\alpha^2\).  Hence
\begin{align}
  c_1(\alpha)
  =
  r_\alpha^2
  \left[
  \frac{\alpha-1}{\alpha}\xi^2
  +
  \frac{2-\alpha}{\alpha}\theta_\alpha^2
  -
  \alpha\varphi_\alpha^2
  \right].
\end{align}
Since
\begin{align}
  \alpha\varphi_\alpha=(\alpha-1)\xi+(2-\alpha)\theta_\alpha,
\end{align}
the bracket is
\begin{align}
  \frac{(\alpha-1)(2-\alpha)}{\alpha}(\xi-\theta_\alpha)^2.
\end{align}
This proves \eqref{eq:c1-formula}.
\end{proof}

The prefactor in \eqref{eq:c1-formula} is strictly negative for \(0<\alpha<1\).
It remains only to show that the square is nonzero.  The equality case is explicit:
\begin{align}
  \xi(t)=\theta_\alpha(t)
  \quad\Longleftrightarrow\quad
  t=\frac1d.
\end{align}
Indeed,
\begin{align}
  \left(\lambda t^{1-\alpha}-b(\tfrac{1-t}{d-1})^{1-\alpha}\right)\eta(t)
  -(\lambda-b)\zeta_\alpha(t)
  =
  \lambda b\left(t^{1-\alpha}-(\tfrac{1-t}{d-1})^{1-\alpha}\right).
\end{align}
We have \(\xi(t)=\theta_\alpha(t)\) if and only if \(t^{1-\alpha}=(\frac{1-t}{d-1})^{1-\alpha}\), equivalently \(t=1/d\).  By \eqref{eq:not-uniform}, the minimizer
\(t_\alpha\) is not \(1/d\).  Therefore,
\begin{align}
  \xi(t_\alpha)\ne\theta_\alpha(t_\alpha),
\end{align}
and \eqref{eq:c1-formula} gives
\begin{equation}
\label{eq:c1-negative}
  c_1(\alpha)<0.
\end{equation}

\subsection{Strict superadditivity}

\begin{app-theorem}[Strict superadditivity of Fourier measurements]
\label{thm:strict-superadditivity-app}
For every \(d\ge3\), every
\(\lambda\in(1/d,1)\), and every \(0<\alpha<1\), the single-heavy Fourier measurement defined in \eqref{eq:Fourier_POVM}  satisfies
\begin{align}
  I_{\alpha}(\mathscr{M}^{\otimes 2}) > 2 I_{\alpha}(\mathscr{M}).
\end{align}
The strict improvement is witnessed by the infinitesimal diagonal classically
correlated path \(\rho_\alpha(\kappa)\) in \eqref{eq:two-copy_POVM}.
\end{app-theorem}

\begin{app-remark}
	In this paper, we only focus on the range $\alpha \in (0,1)$ for $I_{\alpha}(\mathscr{N})$.
	However, the proof of Theorem~\ref{thm:strict-superadditivity-app} naturally extends to $\alpha \in (1,2)$.
\end{app-remark}

\begin{proof}
By the diagonal reduction, \(K^{(1)}(\alpha)=q_\alpha(t_\alpha)\).  The product point
\(\rho_\alpha(0)\) satisfies
\begin{align}
  Q_\alpha(\rho_\alpha(0);\mathscr{M}^{\otimes 2})=K^{(1)}(\alpha)^2.
\end{align}
The linear-response calculation gives \(c_1(\alpha)<0\).  Hence, for sufficiently
small positive \(\kappa\),
\begin{align}
  Q_\alpha(\rho_\alpha(\kappa);\mathscr{M}^{\otimes 2})<K^{(1)}(\alpha)^2.
\end{align}
Therefore
\begin{align}
  K^{(2)}(\alpha)<K^{(1)}(\alpha)^2.
\end{align}
Multiplying the logarithmic inequality by the negative number \(\alpha/(\alpha-1)\)
proves the claim.
\end{proof}

\subsection{Quadratic choice of the correlation strength}

For strictness, the sign \(c_1(\alpha)<0\) is enough.  In computations one may choose
\(\kappa\) by the one-dimensional minimization of
\(\kappa\mapsto Q_\alpha(\rho_\kappa;\mathscr{M}^{\otimes 2})\).  Equivalently, if
\begin{align}
  Q_\alpha(\rho_\alpha(\kappa);\mathscr{M}^{\otimes 2})
  =
  K^{(1)}(\alpha)^2+c_1(\alpha)\kappa+c_2(\alpha)\kappa^2+O(\kappa^3)
\end{align}
and \(c_2(\alpha)>0\), the quadratic approximation gives
\begin{align} \label{eq:kappa_quad}
  \kappa_{\mathrm{quad}}(\alpha)
  =
  -\frac{c_1(\alpha)}{2c_2(\alpha)}.
\end{align}
This is the clean analytic version of the optimized-\(\kappa\) witness.  It is
especially useful near \(\alpha=1\), where the final gap becomes very small.  The
proof above does not require resolving that small gap numerically; it only uses the
exact sign of the derivative \(c_1(\alpha)\), which remains negative for every fixed
\(\alpha<1\).

\subsection{The four-dimensional instance}

The following numerical examples illustrate the scale of the linear response with parameters
\begin{align}
d=4,
\qquad
\lambda=\frac9{25},
\qquad
b=\frac{16}{75}.
\end{align}

\begin{center}\small
\begin{tabular}{c c c c}
\toprule
\(\alpha\) & \(c_1(\alpha)\) & \(\kappa_{\mathrm{quad}}(\alpha)\) & leading-order gap \\
\midrule
\(0.50\) & \(-9.818\times10^{-3}\) & \(2.563\times10^{-3}\) & \(9.13\times10^{-5}\) \\
\(0.60\) & \(-6.651\times10^{-3}\) & \(1.510\times10^{-3}\) & \(2.91\times10^{-5}\) \\
\(0.70\) & \(-3.230\times10^{-3}\) & \(7.570\times10^{-4}\) & \(6.91\times10^{-6}\) \\
\(0.80\) & \(-1.021\times10^{-3}\) & \(3.021\times10^{-4}\) & \(1.04\times10^{-6}\) \\
\(0.90\) & \(-1.308\times10^{-4}\) & \(6.968\times10^{-5}\) & \(5.19\times10^{-8}\) \\
\bottomrule
\end{tabular}
\end{center}

For $\alpha \in [\sfrac12,1)$ and the above chosen parameters, one can also choose an explicit $\kappa(\alpha)$ as
\begin{align} \label{eq:kappa_closed}
	\kappa(\alpha) 
	= \frac{7}{250}(1-\alpha)^2 (2-\alpha) t_{\alpha} (1-t_{\alpha}).
\end{align}
Figure~\ref{fig:Fourier_POVM} plots the above numeric example.
However, note that \eqref{eq:kappa_closed} is not a universal witness; one has to consider at least the second-order derivative bound.

The gap tends to zero as \(\alpha\nearrow1\), but strictness does not rely on a
floating-point comparison of nearly equal numbers.  The analytic certificate is the
identity \eqref{eq:c1-formula} together with \(t_\alpha\ne1/d\).

\section{Analytic Strict Superadditivity: Amplitude Damping}
\label{sec:GAD}

Let \(\mathscr N_{\A\to\B}\) be the amplitude-damping channel with
damping parameter \(0<\gamma<1\), and put
\begin{align}
    \eta:=1-\gamma.
\end{align}
Its Choi matrix is
\begin{align}
    \Gamma_{\A\B}^{\mathscr N}
    =
    \begin{pmatrix}
        1&0&0&\sqrt{\eta}\\
        0&0&0&0\\
        0&0&\gamma&0\\
        \sqrt{\eta}&0&0&\eta
    \end{pmatrix}.
    \label{eq:GAD-app}
\end{align}
Equivalently,
\begin{align}
    \Gamma_{\A\B}^{\mathscr N}
    =
    |\phi_0\rangle\!\langle\phi_0|
    +
    |\phi_1\rangle\!\langle\phi_1|,
    \qquad
    |\phi_0\rangle
    =
    |00\rangle+\sqrt{\eta}\,|11\rangle,
    \qquad
    |\phi_1\rangle
    =
    \sqrt{\gamma}\,|10\rangle.
    \label{eq:AD-Choi-branches}
\end{align}

For \(0<\alpha<1\), recall the Petz--Rényi trace functional
\begin{align}
    Q_{\alpha}(\rho;\mathscr N)
    &:=
    \Tr\left[
        \left(
            \Tr_{\A}\left[
                \rho_{\A}^{1-\alpha}
                \left(
                    \sqrt{\rho_{\A}}\,
                    \Gamma_{\A\B}^{\mathscr N}\,
                    \sqrt{\rho_{\A}}
                \right)^{\alpha}
            \right]
        \right)^{1/\alpha}
    \right],
    \label{eq:AD-Q-def}
    \\
    K^{(n)}(\alpha)
    &:=
    \inf_{\rho_{\A^n}\in\mathcal D(\A^n)}
    Q_{\alpha}(\rho_{\A^n};\mathscr N^{\otimes n}).
    \label{eq:AD-K-def}
\end{align}
Since \(0<\alpha<1\), the coefficient
\(\alpha/(\alpha-1)\) is negative, and hence
\begin{align}
    I_{\alpha}(\mathscr N^{\otimes n})
    =
    \frac{\alpha}{\alpha-1}\log K^{(n)}(\alpha).
    \label{eq:AD-I-via-K}
\end{align}
Consequently, it is enough to prove
\begin{align}
    K^{(2)}(\alpha)
    <
    \bigl(K^{(1)}(\alpha)\bigr)^2.
    \label{eq:AD-target-K}
\end{align}

\subsection{Diagonal one-copy reduction}

Let
\begin{align}
    Z:=|0\rangle\!\langle0|-|1\rangle\!\langle1|.
\end{align}
The amplitude-damping Choi matrix satisfies
\begin{align}
    (Z_{\A}\otimes Z_{\B})
    \Gamma_{\A\B}^{\mathscr N}
    (Z_{\A}\otimes Z_{\B})
    =
    \Gamma_{\A\B}^{\mathscr N}.
    \label{eq:AD-Z-Choi}
\end{align}

For convenience, define the positive operator
\begin{align}
    T_{\alpha}^{\mathscr N}(\rho)
    :=
    \Tr_{\A}\left[
        \rho_{\A}^{1-\alpha}
        \left(
            \sqrt{\rho_{\A}}\,
            \Gamma_{\A\B}^{\mathscr N}\,
            \sqrt{\rho_{\A}}
        \right)^{\alpha}
    \right],
    \label{eq:AD-T-def}
\end{align}
so that
\begin{align}
    Q_{\alpha}(\rho;\mathscr N)
    =
    \Tr\left[
        T_{\alpha}^{\mathscr N}(\rho)^{1/\alpha}
    \right].
\end{align}

\begin{app-lemma}[Diagonal reduction for amplitude damping]
\label{lemm:AD-diagonal-reduction}
For every \(0<\alpha<1\),
\begin{align}
    K^{(1)}(\alpha)
    =
    \min_{0\leq t\leq1}
    Q_{\alpha}(\rho_t;\mathscr N),
    \qquad
    \rho_t:=\mathrm{diag}(t,1-t).
    \label{eq:AD-diagonal-reduction}
\end{align}
\end{app-lemma}

\begin{proof}
For every state \(\rho\), \eqref{eq:AD-Z-Choi} gives
\begin{align}
    \sqrt{Z\rho Z}\,
    \Gamma^{\mathscr N}\,
    \sqrt{Z\rho Z}
    =
    (Z_{\A}\otimes Z_{\B})
    \left(
        \sqrt{\rho}\,
        \Gamma^{\mathscr N}\,
        \sqrt{\rho}
    \right)
    (Z_{\A}\otimes Z_{\B}).
\end{align}
Using also
\((Z\rho Z)^{1-\alpha}=Z\rho^{1-\alpha}Z\), we obtain
\begin{align}
    T_{\alpha}^{\mathscr N}(Z\rho Z)
    =
    Z_{\B}T_{\alpha}^{\mathscr N}(\rho)Z_{\B}.
\end{align}
Therefore
\begin{align}
    Q_{\alpha}(Z\rho Z;\mathscr N)
    =
    Q_{\alpha}(\rho;\mathscr N).
    \label{eq:AD-Z-invariance}
\end{align}

Consider the computational-basis pinching
\begin{align}
    \Delta_Z(\rho)
    :=
    \frac12\left(\rho+Z\rho Z\right).
\end{align}
For a qubit, \(\Delta_Z(\rho)\) is diagonal. By
Proposition~\ref{prop:Petz_convexity}, the map
\(\rho\mapsto Q_{\alpha}(\rho;\mathscr N)\) is convex for
\(0<\alpha<1\). Hence, using \eqref{eq:AD-Z-invariance},
\begin{align}
    Q_{\alpha}(\Delta_Z(\rho);\mathscr N)
    &\leq
    \frac12 Q_{\alpha}(\rho;\mathscr N)
    +
    \frac12 Q_{\alpha}(Z\rho Z;\mathscr N)
    \notag\\
    &=
    Q_{\alpha}(\rho;\mathscr N).
\end{align}
Thus every state can be replaced by a diagonal state without increasing
the objective. Since diagonal states are themselves admissible, this
proves \eqref{eq:AD-diagonal-reduction}. The minimum is attained by
finite-dimensional continuity and compactness of the state space.
\end{proof}

Set
\begin{align}
    u:=1-t,
    \qquad
    \ell(t):=t+\eta u=\eta+\gamma t.
    \label{eq:AD-u-ell}
\end{align}
For the diagonal state \(\rho_t\), define
\begin{align}
    |\psi_t\rangle
    :=
    \sqrt t\,|00\rangle+\sqrt{\eta u}\,|11\rangle.
\end{align}
Using \eqref{eq:AD-Choi-branches}, we have
\begin{align}
    \sqrt{\rho_t}\,
    \Gamma^{\mathscr N}\,
    \sqrt{\rho_t}
    =
    |\psi_t\rangle\!\langle\psi_t|
    +
    \gamma u\,|10\rangle\!\langle10|.
    \label{eq:AD-one-copy-blocks}
\end{align}
The two summands have orthogonal supports, and
\(\langle\psi_t|\psi_t\rangle=\ell(t)\). Therefore
\begin{align}
    \left(
        \sqrt{\rho_t}\,
        \Gamma^{\mathscr N}\,
        \sqrt{\rho_t}
    \right)^\alpha
    =
    \ell(t)^{\alpha-1}
    |\psi_t\rangle\!\langle\psi_t|
    +
    (\gamma u)^\alpha
    |10\rangle\!\langle10|.
\end{align}
After multiplying by \(\rho_t^{1-\alpha}\) and tracing out \(\A\),
we obtain
\begin{align}
    T_{\alpha}^{\mathscr N}(\rho_t)
    =
    w_0(t)|0\rangle\!\langle0|
    +
    w_1(t)|1\rangle\!\langle1|,
    \label{eq:AD-one-copy-T}
\end{align}
where
\begin{align}
    w_0(t)
    &=
    t^{2-\alpha}\ell(t)^{\alpha-1}
    +
    \gamma^\alpha(1-t),
    \label{eq:w0-direct}
    \\
    w_1(t)
    &=
    \eta(1-t)^{2-\alpha}\ell(t)^{\alpha-1}.
    \label{eq:w1-direct}
\end{align}
Consequently,
\begin{align}
    q_{\alpha}(t)
    &:=
    Q_{\alpha}(\rho_t;\mathscr N)
    =
    w_0(t)^{1/\alpha}
    +
    w_1(t)^{1/\alpha}.
    \label{eq:AD-one-copy-q}
\end{align}

\begin{app-lemma}[Convexity and interiority of the one-copy minimizer]
\label{lemm:AD-reduced-convexity}
For every \(0<\gamma<1\) and \(0<\alpha<1\), the function
\(q_{\alpha}\) is convex on \([0,1]\). Moreover,
\begin{align}
    q_{\alpha}'(0+)<0,
    \qquad
    q_{\alpha}'(1-)>0.
\end{align}
In particular, every minimizer \(t_\alpha\) of \(q_\alpha\) belongs to
\((0,1)\).
\end{app-lemma}

\begin{proof}
Since \(2-\alpha>1\), the function \(r\mapsto r^{2-\alpha}\) is
convex on \([0,\infty)\). Its perspective
\begin{align}
    (r,z)
    \mapsto
    z\left(\frac rz\right)^{2-\alpha}
    =
    r^{2-\alpha}z^{\alpha-1}
\end{align}
is therefore jointly convex for \(r\geq0\) and \(z>0\).
Because \(t\), \(1-t\), and \(\ell(t)\) are affine in \(t\), both
\begin{align}
    t^{2-\alpha}\ell(t)^{\alpha-1},
    \qquad
    (1-t)^{2-\alpha}\ell(t)^{\alpha-1}
\end{align}
are convex. Thus \(w_0\) and \(w_1\) are nonnegative convex functions.
Since \(r\mapsto r^{1/\alpha}\) is convex and increasing for
\(0<\alpha<1\), the function
\(q_\alpha=w_0^{1/\alpha}+w_1^{1/\alpha}\) is convex.

A direct evaluation of the one-sided derivatives gives
\begin{align}
    q_{\alpha}'(0+)
    =
    -\frac{2-\alpha}{\alpha}
    <0,
    \label{eq:AD-q-derivative-zero}
\end{align}
and
\begin{align}
    q_{\alpha}'(1-)
    =
    \frac{
        2-\alpha+(\alpha-1)\gamma-\gamma^\alpha
    }{\alpha}.
    \label{eq:AD-q-derivative-one}
\end{align}
To see that the latter is positive, define
\begin{align}
    h_\alpha(\gamma)
    :=
    2-\alpha+(\alpha-1)\gamma-\gamma^\alpha.
\end{align}
Then
\begin{align}
    h_\alpha(1)=0,
    \qquad
    h_\alpha'(\gamma)
    =
    \alpha-1-\alpha\gamma^{\alpha-1}
    <0.
\end{align}
Hence \(h_\alpha(\gamma)>h_\alpha(1)=0\) whenever \(0<\gamma<1\).
Neither endpoint can therefore minimize \(q_\alpha\), proving the
claim.
\end{proof}

Fix any minimizer \(t_\alpha\in(0,1)\), and abbreviate
\begin{align}
    t:=t_\alpha,
    \qquad
    u:=1-t,
    \qquad
    \ell:=\eta+\gamma t,
    \qquad
    w_i:=w_i(t),\quad i\in\{0,1\}.
    \label{eq:AD-minimizer-abbreviations}
\end{align}
Then
\begin{align}
    K^{(1)}(\alpha)
    =
    q_\alpha(t)
    =
    w_0^{1/\alpha}+w_1^{1/\alpha}.
    \label{eq:AD-K1}
\end{align}

\subsection{A correlated diagonal two-copy path}

Consider the diagonal two-copy state
\begin{align}
    \rho_\alpha(\kappa)
    &=
    \rho_t\otimes\rho_t
    +
    \kappa\,
    \mathrm{diag}(1,-1,-1,1)
    \notag\\
    &=
    \mathrm{diag}\left(
        p_{00}(\kappa),
        p_{01}(\kappa),
        p_{10}(\kappa),
        p_{11}(\kappa)
    \right),
    \label{eq:two-copy_GAD}
\end{align}
where
\begin{align}
    p_{00}(\kappa)&=t^2+\kappa,
    &
    p_{01}(\kappa)&=tu-\kappa,
    \notag\\
    p_{10}(\kappa)&=tu-\kappa,
    &
    p_{11}(\kappa)&=u^2+\kappa.
    \label{eq:AD-p-kappa}
\end{align}
This is a full-rank density operator whenever
\begin{align}
    -\min\{t^2,u^2\}<\kappa<tu.
    \label{eq:AD-kappa-interval}
\end{align}
Moreover, the perturbation preserves both one-copy marginals:
\begin{align}
    \Tr_{\A_2}\rho_\alpha(\kappa)
    =
    \Tr_{\A_1}\rho_\alpha(\kappa)
    =
    \rho_t.
    \label{eq:AD-fixed-marginals}
\end{align}
Thus positive \(\kappa\) introduces a classical correlation without
changing either marginal.

For product states and product channels, the functional is
multiplicative:
\begin{align}
    Q_\alpha(\rho\otimes\sigma;\mathscr N\otimes\mathscr M)
    =
    Q_\alpha(\rho;\mathscr N)
    Q_\alpha(\sigma;\mathscr M).
    \label{eq:AD-Q-product}
\end{align}
Indeed, the operator \(T_\alpha\) in \eqref{eq:AD-T-def} factors as a
tensor product, and both the \(1/\alpha\)-power and the trace factor.
Consequently,
\begin{align}
    Q_\alpha(\rho_\alpha(0);\mathscr N^{\otimes2})
    =
    Q_\alpha(\rho_t;\mathscr N)^2
    =
    \bigl(K^{(1)}(\alpha)\bigr)^2.
    \label{eq:AD-product-value}
\end{align}

\subsection{The four two-copy output weights}

The decomposition \eqref{eq:AD-Choi-branches} gives four two-copy
branches
\begin{align}
    |\phi_r\rangle_{\A_1\B_1}
    \otimes
    |\phi_s\rangle_{\A_2\B_2},
    \qquad
    r,s\in\{0,1\}.
\end{align}
For the diagonal state \(\rho_\alpha(\kappa)\), define
\begin{align}
    |\psi_{rs}(\kappa)\rangle
    :=
    \left(
        \sqrt{\rho_\alpha(\kappa)_{\A_1\A_2}}
        \otimes\I_{\B_1\B_2}
    \right)
    |\phi_r\rangle_{\A_1\B_1}
    |\phi_s\rangle_{\A_2\B_2}.
\end{align}
The four vectors \( |\psi_{rs}(\kappa)\rangle \) have mutually
orthogonal supports. Hence
\begin{align}
    \sqrt{\rho_\alpha(\kappa)}\,
    (\Gamma^{\mathscr N})^{\otimes2}\,
    \sqrt{\rho_\alpha(\kappa)}
    =
    \sum_{r,s=0}^1
    |\psi_{rs}(\kappa)\rangle
    \!\langle\psi_{rs}(\kappa)|
\end{align}
is an orthogonal sum of rank-one operators.

Let
\begin{align}
    \lambda_{rs}(\kappa)
    :=
    \langle\psi_{rs}(\kappa)|\psi_{rs}(\kappa)\rangle.
\end{align}
Explicitly,
\begin{align}
    \lambda_{00}(\kappa)
    &=
    p_{00}
    +
    \eta(p_{01}+p_{10})
    +
    \eta^2p_{11}
    =
    \ell^2+\gamma^2\kappa,
    \label{eq:AD-lambda00}
    \\
    \lambda_{01}(\kappa)
    &=
    \gamma(p_{01}+\eta p_{11})
    =
    \gamma(u\ell-\gamma\kappa),
    \label{eq:AD-lambda01}
    \\
    \lambda_{10}(\kappa)
    &=
    \gamma(p_{10}+\eta p_{11})
    =
    \gamma(u\ell-\gamma\kappa),
    \label{eq:AD-lambda10}
    \\
    \lambda_{11}(\kappa)
    &=
    \gamma^2p_{11}
    =
    \gamma^2(u^2+\kappa).
    \label{eq:AD-lambda11}
\end{align}
Here and below, the argument \(\kappa\) of the \(p_{ij}\)'s is
suppressed.

Since
\begin{align}
    \bigl(
        |\psi_{rs}\rangle\!\langle\psi_{rs}|
    \bigr)^\alpha
    =
    \lambda_{rs}^{\alpha-1}
    |\psi_{rs}\rangle\!\langle\psi_{rs}|,
\end{align}
the operator after tracing out the two input systems is diagonal:
\begin{align}
    T_\alpha^{\mathscr N^{\otimes2}}
    \bigl(\rho_\alpha(\kappa)\bigr)
    =
    \sum_{i,j=0}^1
    w_{ij}(\kappa)|ij\rangle\!\langle ij|.
    \label{eq:AD-two-copy-T}
\end{align}
The four diagonal entries are
\begin{align}
    w_{00}(\kappa)
    &=
    p_{00}^{2-\alpha}\lambda_{00}^{\alpha-1}
    +
    \gamma p_{01}^{2-\alpha}\lambda_{01}^{\alpha-1}
    +
    \gamma p_{10}^{2-\alpha}\lambda_{10}^{\alpha-1}
    +
    \gamma^2p_{11}^{2-\alpha}\lambda_{11}^{\alpha-1},
    \label{eq:AD-w00}
    \\
    w_{01}(\kappa)
    &=
    \eta p_{01}^{2-\alpha}\lambda_{00}^{\alpha-1}
    +
    \gamma\eta p_{11}^{2-\alpha}\lambda_{10}^{\alpha-1},
    \label{eq:AD-w01}
    \\
    w_{10}(\kappa)
    &=
    \eta p_{10}^{2-\alpha}\lambda_{00}^{\alpha-1}
    +
    \gamma\eta p_{11}^{2-\alpha}\lambda_{01}^{\alpha-1},
    \label{eq:AD-w10}
    \\
    w_{11}(\kappa)
    &=
    \eta^2p_{11}^{2-\alpha}\lambda_{00}^{\alpha-1}.
    \label{eq:AD-w11}
\end{align}
For example, the four terms in \(w_{00}\) correspond, respectively,
to no jumps, a jump in the second use, a jump in the first use, and
jumps in both uses. The two terms in \(w_{01}\) correspond to the
input \(01\) with no jump and the input \(11\) with a jump in the
first use.

It follows that
\begin{align}
    Q_\alpha(\rho_\alpha(\kappa);\mathscr N^{\otimes2})
    =
    \sum_{i,j=0}^1 w_{ij}(\kappa)^{1/\alpha}.
    \label{eq:AD-Q-two-copy-weights}
\end{align}
At \(\kappa=0\),
\begin{align}
    \lambda_{00}(0)&=\ell^2,
    &
    \lambda_{01}(0)&=\gamma u\ell,
    \notag\\
    \lambda_{10}(0)&=\gamma u\ell,
    &
    \lambda_{11}(0)&=\gamma^2u^2,
\end{align}
and the product structure gives
\begin{align}
    w_{ij}(0)=w_iw_j,
    \qquad i,j\in\{0,1\}.
    \label{eq:AD-w-product}
\end{align}

\subsection{Factorized linear response}

Define
\begin{align}
    a_0
    &:=
    \gamma t^{2-\alpha}\ell^{\alpha-2}
    -
    \gamma^\alpha,
    &
    a_1
    &:=
    \gamma\eta u^{2-\alpha}\ell^{\alpha-2}
    =
    \frac{\gamma w_1}{\ell},
    \label{eq:AD-a-def}
    \\
    b_0
    &:=
    t^{1-\alpha}\ell^{\alpha-1}
    -
    \gamma^\alpha,
    &
    b_1
    &:=
    -\eta u^{1-\alpha}\ell^{\alpha-1}
    =
    -\frac{w_1}{u}.
    \label{eq:AD-b-def}
\end{align}

\begin{app-lemma}[Factorized first variation]
\label{lemm:AD-factorized-response}
The one-copy derivatives satisfy
\begin{align}
    w_i'(t)
    =
    (\alpha-1)a_i+(2-\alpha)b_i,
    \qquad i\in\{0,1\}.
    \label{eq:split-direct}
\end{align}
Moreover, along the correlated path
\eqref{eq:two-copy_GAD},
\begin{align}
    \left.
    \frac{\mathrm d}{\mathrm d\kappa}
    w_{ij}(\kappa)
    \right|_{\kappa=0}
    =
    (\alpha-1)a_i a_j
    +
    (2-\alpha)b_i b_j,
    \qquad
    i,j\in\{0,1\}.
    \label{eq:AD-factorized-wdot}
\end{align}
Equivalently,
\begin{align}
    \left.
    \frac{\mathrm d}{\mathrm d\kappa}
    \begin{pmatrix}
        w_{00}(\kappa)&w_{01}(\kappa)\\
        w_{10}(\kappa)&w_{11}(\kappa)
    \end{pmatrix}
    \right|_{\kappa=0}
    =
    (\alpha-1)
    \begin{pmatrix}a_0\\a_1\end{pmatrix}
    \begin{pmatrix}a_0&a_1\end{pmatrix}
    +
    (2-\alpha)
    \begin{pmatrix}b_0\\b_1\end{pmatrix}
    \begin{pmatrix}b_0&b_1\end{pmatrix}.
    \label{eq:wdot-direct}
\end{align}
\end{app-lemma}

\begin{proof}
We give a factorized derivation of the identity.

Let
\begin{align}
    r_0:=t,
    \qquad
    r_1:=u,
    \qquad
    d_0:=1,
    \qquad
    d_1:=-1.
\end{align}
Thus
\begin{align}
    \frac{\mathrm d}{\mathrm dt}r_x=d_x,
    \qquad
    p_{xy}(\kappa)=r_xr_y+\kappa d_xd_y.
    \label{eq:AD-correlation-factorization}
\end{align}

For a single channel use, let \(s=0\) denote the no-jump branch and
\(s=1\) the jump branch. Denote by \(c_s(x)\) the squared branch
amplitude for input \(x\):
\begin{align}
    c_0(0)=1,
    \qquad
    c_0(1)=\eta,
    \qquad
    c_1(0)=0,
    \qquad
    c_1(1)=\gamma.
\end{align}
The corresponding branch eigenvalues and their variations in the
direction \(d=(1,-1)\) are
\begin{align}
    \lambda_s
    &:=
    \sum_{x=0}^1 r_xc_s(x),
    &
    \delta\lambda_s
    &:=
    \sum_{x=0}^1 d_xc_s(x).
\end{align}
Thus
\begin{align}
    \lambda_0=\ell,
    \qquad
    \lambda_1=\gamma u,
    \qquad
    \delta\lambda_0=\gamma,
    \qquad
    \delta\lambda_1=-\gamma.
    \label{eq:AD-single-branch-data}
\end{align}

The input-branch pairs producing output \(i\) are
\begin{align}
    \mathcal E_0
    =
    \{(0,0),(1,1)\},
    \qquad
    \mathcal E_1
    =
    \{(1,0)\}.
\end{align}
Accordingly,
\begin{align}
    w_i
    =
    \sum_{(x,s)\in\mathcal E_i}
    r_x^{2-\alpha}
    c_s(x)
    \lambda_s^{\alpha-1}.
    \label{eq:AD-w-branch-form}
\end{align}
Differentiating \eqref{eq:AD-w-branch-form} with respect to \(t\)
gives
\begin{align}
    w_i'
    &=
    (\alpha-1)
    \sum_{(x,s)\in\mathcal E_i}
    r_x^{2-\alpha}
    c_s(x)
    \lambda_s^{\alpha-2}
    \delta\lambda_s
    \notag\\
    &\quad+
    (2-\alpha)
    \sum_{(x,s)\in\mathcal E_i}
    d_xr_x^{1-\alpha}
    c_s(x)
    \lambda_s^{\alpha-1}.
    \label{eq:AD-one-copy-factorized-derivative}
\end{align}
The first sum equals \(a_i\), and the second equals \(b_i\).
Evaluating the three possible pairs
\((0,0),(1,1),(1,0)\) gives exactly
\eqref{eq:AD-a-def}--\eqref{eq:AD-b-def}, proving
\eqref{eq:split-direct}.

For two channel uses, define the branch-pair eigenvalue
\begin{align}
    \lambda_{s_1s_2}^{(2)}(\kappa)
    :=
    \sum_{x,y=0}^1
    p_{xy}(\kappa)c_{s_1}(x)c_{s_2}(y).
\end{align}
Using \eqref{eq:AD-correlation-factorization}, we find
\begin{align}
    \lambda_{s_1s_2}^{(2)}(0)
    &=
    \lambda_{s_1}\lambda_{s_2},
    \label{eq:AD-lambda-product}
    \\
    \left.
    \frac{\mathrm d}{\mathrm d\kappa}
    \lambda_{s_1s_2}^{(2)}(\kappa)
    \right|_{\kappa=0}
    &=
    \delta\lambda_{s_1}\delta\lambda_{s_2}.
    \label{eq:AD-lambda-response-product}
\end{align}
The two-copy output weight has the branch representation
\begin{align}
    w_{ij}(\kappa)
    =
    \sum_{\substack{(x,s_1)\in\mathcal E_i\\
                    (y,s_2)\in\mathcal E_j}}
    p_{xy}(\kappa)^{2-\alpha}
    c_{s_1}(x)c_{s_2}(y)
    \left(
        \lambda_{s_1s_2}^{(2)}(\kappa)
    \right)^{\alpha-1}.
    \label{eq:AD-wij-branch-form}
\end{align}
Differentiating at \(\kappa=0\), the variation of the branch
eigenvalue contributes
\begin{align}
    &(\alpha-1)
    \sum_{\substack{(x,s_1)\in\mathcal E_i\\
                    (y,s_2)\in\mathcal E_j}}
    (r_xr_y)^{2-\alpha}
    c_{s_1}(x)c_{s_2}(y)
    (\lambda_{s_1}\lambda_{s_2})^{\alpha-2}
    \delta\lambda_{s_1}\delta\lambda_{s_2}
    \notag\\
    &\qquad=
    (\alpha-1)a_i a_j.
\end{align}
The variation of the input probabilities contributes
\begin{align}
    &(2-\alpha)
    \sum_{\substack{(x,s_1)\in\mathcal E_i\\
                    (y,s_2)\in\mathcal E_j}}
    d_xd_y(r_xr_y)^{1-\alpha}
    c_{s_1}(x)c_{s_2}(y)
    (\lambda_{s_1}\lambda_{s_2})^{\alpha-1}
    \notag\\
    &\qquad=
    (2-\alpha)b_i b_j.
\end{align}
This proves \eqref{eq:AD-factorized-wdot}.
\end{proof}

Define
\begin{align}
    x_\alpha
    &:=
    w_0^{1/\alpha-1}a_0
    +
    w_1^{1/\alpha-1}a_1,
    \label{eq:AD-x-def}
    \\
    y_\alpha
    &:=
    w_0^{1/\alpha-1}b_0
    +
    w_1^{1/\alpha-1}b_1.
    \label{eq:AD-y-def}
\end{align}
Since \(t=t_\alpha\) is an interior minimizer, stationarity and
\eqref{eq:split-direct} give
\begin{align}
    0
    =
    q_\alpha'(t)
    =
    \frac1\alpha
    \left[
        (\alpha-1)x_\alpha
        +
        (2-\alpha)y_\alpha
    \right].
\end{align}
Thus
\begin{align}
    (\alpha-1)x_\alpha
    +
    (2-\alpha)y_\alpha
    =
    0.
    \label{eq:xy-stationarity-direct}
\end{align}

Using \eqref{eq:AD-w-product} and
Lemma~\ref{lemm:AD-factorized-response}, we obtain
\begin{align}
    &\left.
    \frac{\mathrm d}{\mathrm d\kappa}
    Q_\alpha(\rho_\alpha(\kappa);\mathscr N^{\otimes2})
    \right|_{\kappa=0}
    \notag\\
    &\quad=
    \frac1\alpha
    \sum_{i,j=0}^1
    (w_iw_j)^{1/\alpha-1}
    \left[
        (\alpha-1)a_ia_j
        +
        (2-\alpha)b_ib_j
    \right]
    \notag\\
    &\quad=
    \frac1\alpha
    \left[
        (\alpha-1)x_\alpha^2
        +
        (2-\alpha)y_\alpha^2
    \right].
    \label{eq:response-pre-direct}
\end{align}
Eliminating \(y_\alpha\) by
\eqref{eq:xy-stationarity-direct} gives
\begin{align}
    \left.
    \frac{\mathrm d}{\mathrm d\kappa}
    Q_\alpha(\rho_\alpha(\kappa);\mathscr N^{\otimes2})
    \right|_{\kappa=0}
    =
    \frac{\alpha-1}{\alpha(2-\alpha)}
    x_\alpha^2.
    \label{eq:response-x-direct}
\end{align}

It remains to prove that \(x_\alpha\neq0\). Suppose, to the contrary,
that \(x_\alpha=0\). Since \(2-\alpha>0\),
\eqref{eq:xy-stationarity-direct} then gives \(y_\alpha=0\).
Therefore the strictly positive vector
\begin{align}
    z
    :=
    \left(
        w_0^{1/\alpha-1},
        w_1^{1/\alpha-1}
    \right)
\end{align}
is orthogonal to both
\((a_0,a_1)\) and \((b_0,b_1)\). In two dimensions this would imply
that these latter two vectors are linearly dependent.

On the other hand, direct simplification gives
\begin{align}
    a_0b_1-a_1b_0
    &=
    \eta\gamma^\alpha
    u^{1-\alpha}\ell^{\alpha-2}
    -
    \eta\gamma
    t^{1-\alpha}u^{1-\alpha}
    \ell^{2\alpha-3}
    \notag\\
    &=
    \eta\gamma\,
    t^{1-\alpha}u^{1-\alpha}
    \ell^{\alpha-2}
    \left[
        (\gamma t)^{\alpha-1}
        -
        \ell^{\alpha-1}
    \right].
    \label{eq:det-direct}
\end{align}
All factors outside the brackets are strictly positive. Moreover,
\begin{align}
    0<\gamma t<\ell
\end{align}
and \(\alpha-1<0\), so
\begin{align}
    (\gamma t)^{\alpha-1}
    >
    \ell^{\alpha-1}.
\end{align}
Hence
\begin{align}
    a_0b_1-a_1b_0>0,
\end{align}
contradicting linear dependence. Therefore \(x_\alpha\neq0\).

Since \(0<\alpha<1\), we conclude from
\eqref{eq:response-x-direct} that
\begin{align}
    \left.
    \frac{\mathrm d}{\mathrm d\kappa}
    Q_\alpha(\rho_\alpha(\kappa);\mathscr N^{\otimes2})
    \right|_{\kappa=0}
    <0.
    \label{eq:AD-negative-response}
\end{align}

\begin{app-theorem}[Strict superadditivity of amplitude damping]
\label{thm:AD-strict-superadditivity}
Let \(\mathscr N\) be the amplitude-damping channel
\eqref{eq:GAD-app} with \(0<\gamma<1\). Then
\begin{align}
    I_\alpha(\mathscr N^{\otimes2})
    >
    2I_\alpha(\mathscr N)
    \qquad
    \forall\,0<\alpha<1.
\end{align}
The strict improvement is witnessed by the diagonal,
classically correlated path \(\rho_\alpha(\kappa)\) in
\eqref{eq:two-copy_GAD}.
\end{app-theorem}

\begin{proof}
By \eqref{eq:AD-negative-response}, for every sufficiently small
\(\kappa>0\),
\begin{align}
    Q_\alpha(\rho_\alpha(\kappa);\mathscr N^{\otimes2})
    <
    Q_\alpha(\rho_\alpha(0);\mathscr N^{\otimes2})
    =
    \bigl(K^{(1)}(\alpha)\bigr)^2.
\end{align}
Therefore
\begin{align}
    K^{(2)}(\alpha)
    &\leq
    Q_\alpha(\rho_\alpha(\kappa);\mathscr N^{\otimes2})
    <
    \bigl(K^{(1)}(\alpha)\bigr)^2.
\end{align}
Finally, because \(\alpha/(\alpha-1)<0\),
\begin{align}
    I_\alpha(\mathscr N^{\otimes2})
    &=
    \frac{\alpha}{\alpha-1}\log K^{(2)}(\alpha)
    \notag\\
    &>
    \frac{\alpha}{\alpha-1}
    \log\bigl(K^{(1)}(\alpha)^2\bigr)
    =
    2I_\alpha(\mathscr N).
\end{align}
\end{proof}


\section{Single-Letter Bounds} \label{sec:single-letter}

Except for the special channels (Proposition~\ref{prop:additivity_special-app}) and $\alpha = 1,2$, the Petz--\Renyi information $I_{\alpha}(\mathscr{N})$ can be strictly superadditive as shown in Sections~\ref{sec:q-c} and \ref{sec:GAD}.
This means that the achievable random coding exponent $E_{\mathrm{r}}(R;\mathscr{N})$ has a multi-letter expression in general.

Define the regularized Petz channel information as
\begin{align}
	I_{\alpha}^{\infty}(\mathscr{N})
	&\coloneq \sup_{n\in\mathds{N}} \frac{1}{n} I_{\alpha}\left(\mathscr{N}^{\otimes n}\right)
	= \lim_{n\to\infty} \frac{1}{n} I_{\alpha}\left(\mathscr{N}^{\otimes n}\right),
\end{align}
where the last equality follows from the superadditive sequence and Fekete’s lemma.

\begin{app-lemma}[\hspace{-0.5pt}{\cite[Proposition 11]{Wil18}}, {\cite[Corollary 3.6]{Jen2018_I}}, {\cite[Lemma 5.4]{ATB24}}] \label{lemm:Petz_sandwiched}
	Let $\rho$ and $\sigma$ be density operators.
	Then,
	\begin{align}
		D_{2-\frac{1}{\alpha}}(\rho\Vert\sigma)
		\leq 
		\widetilde{D}_{\alpha}(\rho\Vert\sigma), \quad \forall\, \alpha \geq \tfrac{1}{2}.
	\end{align}
	
	Equivalently,
	\begin{align}
		D_{\alpha}(\rho\Vert\sigma) \leq \widetilde{D}_{\frac{1}{2-\alpha}}(\rho\Vert\sigma), \quad \forall \alpha \in [0,2].
	\end{align}
\end{app-lemma}

\begin{app-proposition}[Single-letter bounds] \label{prop:single-letter}
	Let $\mathscr{N}_{\A\to\B}$ be a quantum channel.
	We have
	\begin{align}
		I_{\alpha}(\mathscr{N})
		\leq 
		I_{\alpha}^{\infty}(\mathscr{N})
		\leq
		\widetilde{I}_{\frac{1}{2-\alpha}}(\mathscr{N})
		\leq 
		{I}_{\frac{1}{2-\alpha}}(\mathscr{N}),
		\quad \forall\,\alpha \in [0,2].
	\end{align}
\end{app-proposition}
\begin{proof}
For any integer $n$ and $\alpha \in [0,2]$, 
Lemma~\ref{lemm:Petz_sandwiched} implies that
\begin{align}
		I_{\alpha}(\mathscr{N})
		\leq 
		\frac{1}{n}I_{\alpha}(\mathscr{N}^{\otimes n})
		\leq
		\frac{1}{n}\widetilde{I}_{\frac{1}{2-\alpha}}(\mathscr{N}^{\otimes n})
		= \widetilde{I}_{\frac{1}{2-\alpha}}(\mathscr{N}),
	\end{align}
    where the last equality follows from the additivity of $\widetilde{I}_{\beta}(\mathscr{N})$, $\beta = \frac{1}{2-\alpha}\in[\sfrac12,1)$ \cite{li2026completely} and $\beta = \frac{1}{2-\alpha}\geq 1$ \cite{GW14}.
Taking $n\to\infty$ completes the proof.
\end{proof}

The single-letter upper bound in terms of  $\widetilde{I}_{\frac{1}{2-\alpha}}(\mathscr{N})$ is a double-state optimization; see~\eqref{eq:defn_sandwiched-app}.
One may further relax it to a more computationally feasible one-state optimization of the Petz form ${I}_{\frac{1}{2-\alpha}}(\mathscr{N})$.
Below, we show that the sandwiched \Renyi information $\widetilde{I}_{\alpha}(\mathscr{N})$ admits a one-state optimization expression for quantum-classical channels.

\begin{app-proposition}[One-state optimization for quantum-classical channels]
	\label{proposition:sandwiched_sibson}
	Let
	\(\alpha\in[\sfrac12,1)\cup(1,+\infty)\)
	and let \(\mathscr{M}=\{M_y\}_{y\in\Y}\) be a finite POVM. Define
	\begin{equation}
		\widetilde{Q}_y(\rho):=
		\Tr\left[
			\left(
				\rho^{1/(2\alpha)}
				M_y^{\top}
				\rho^{1/(2\alpha)}
			\right)^\alpha
		\right].
	\end{equation}
	Then
	\begin{equation}
		\widetilde{I}_\alpha(\mathscr{M})
		=
		\begin{cases}
			\displaystyle
			\frac{\alpha}{\alpha-1}
			\log
			\inf_{\rho}
			\sum_{y\in\Y}
			\widetilde{Q}_y(\rho)^{1/\alpha},
			&
			\alpha\in[\sfrac12,1),
			\\[10pt]
			\displaystyle
			\frac{\alpha}{\alpha-1}
			\log
			\sup_{\rho}
			\sum_{y\in\Y}
			\widetilde{Q}_y(\rho)^{1/\alpha},
			&
			\alpha\in(1,+\infty).
		\end{cases}
		\label{eq:K-reduction}
	\end{equation}
	For fixed \(\rho\), an optimizing auxiliary output distribution is
\begin{equation}
	q_y^\star(\rho)
	=
	\frac{\widetilde{Q}_y(\rho)^{1/\alpha}}
	{\sum_{\bar{y}\in\Y}
		\widetilde{Q}_{\bar{y}}(\rho)^{1/\alpha}},
	\label{eq:sigma-opt}
\end{equation}
with the convention that
\(q_y^\star(\rho)=0\) whenever
\(\widetilde{Q}_y(\rho)=0\).
\end{app-proposition}

\begin{proof}
	Write
	\(\sigma_{\Y}=\sum_y q_y|y\rangle\langle y|_{\Y}\).
	Since the output is classical, we have
	\begin{align}
		&\Tr\left[
			\left(
				(\rho\otimes\sigma_{\Y})^{\frac{1-\alpha}{2\alpha}}
				\omega_{\R\Y}^{\rho}
				(\rho\otimes\sigma_{\Y})^{\frac{1-\alpha}{2\alpha}}
			\right)^\alpha
		\right]
		\\
		&\qquad
		=
		\sum_y q_y^{1-\alpha}
		\Tr\left[
			\left(
				\rho^{\frac{1-\alpha}{2\alpha}}
				\rho^{1/2}M_y^{\top}\rho^{1/2}
				\rho^{\frac{1-\alpha}{2\alpha}}
			\right)^\alpha
		\right]
		\\
		&\qquad
		=
		\sum_y q_y^{1-\alpha}\widetilde{Q}_y(\rho).
	\end{align}
	For \(\alpha\in[\sfrac12,1)\), since
	\(1/(\alpha-1)<0\), minimizing the divergence over \(q\)
	is equivalent to maximizing
	\(\sum_yq_y^{1-\alpha}\widetilde{Q}_y(\rho)\)
	over the probability simplex.
	For \(\alpha>1\), since \(1/(\alpha-1)>0\), minimizing the
	divergence is equivalent to minimizing the same expression.
	In the latter case, the support condition requires
	\(q_y>0\) whenever \(\widetilde{Q}_y(\rho)>0\); terms for which
	\(\widetilde{Q}_y(\rho)=0\) may be omitted.

	The scalar objective is concave in \(q\) for \(\alpha<1\) and
	convex in \(q\) for \(\alpha>1\). In both cases, the optimality condition gives
	\begin{equation}
		q_y\propto\widetilde{Q}_y(\rho)^{1/\alpha}
	\end{equation}
    for $y:\widetilde{Q}_y(\rho)>0$
	and the optimal value is
	\begin{equation}
		\left(
			\sum_y\widetilde{Q}_y(\rho)^{1/\alpha}
		\right)^\alpha.
	\end{equation}
	Hence, for either
	\(\alpha\in[\sfrac12,1)\) or \(\alpha>1\),
	\begin{equation}
		\inf_{\sigma_{\Y}}
		\widetilde{D}_\alpha
		\left(
			\omega_{\R\Y}^{\rho}
			\middle\|
			\rho\otimes\sigma_{\Y}
		\right)
		=
		\frac{\alpha}{\alpha-1}
		\log
		\sum_y\widetilde{Q}_y(\rho)^{1/\alpha}.
	\end{equation}
	For \(\alpha<1\), the coefficient
	\(\alpha/(\alpha-1)\) is negative, so the outer supremum over
	\(\rho\) becomes the infimum in \eqref{eq:K-reduction}.
	For \(\alpha>1\), this coefficient is positive, so the outer
	supremum remains the supremum in \eqref{eq:K-reduction}.
\end{proof}

\end{document}